\newtheorem{theorem}{Theorem}
\newtheorem{lemma}{Lemma}
\newtheorem{definition}{Definition}
\newtheorem{remark}{Remark}
\title{On Deterministic Sampling Patterns for Robust Low-Rank Matrix Completion}
\author{Morteza Ashraphijuo, Vaneet Aggarwal, and Xiaodong Wang\thanks{Morteza Ashraphijuo and Xiaodong Wang are with the Department of Electrical Engineering, Columbia University, NY, email: \{ashraphijuo,wangx\}@ee.columbia.edu. Vaneet Aggarwal is with the School of Industrial Engineering, Purdue University, West Lafayette, IN, email: vaneet@purdue.edu. }}
\begin{document}
\maketitle
\begin{abstract}
In this letter, we study the deterministic sampling patterns for the completion of low rank matrix, when corrupted with a sparse noise, also known as robust matrix completion. We extend the recent results on the deterministic sampling patterns in the absence of noise based on the geometric analysis on the Grassmannian manifold. A special case where each column has a certain number of noisy entries is considered, where  our probabilistic analysis performs very efficiently. Furthermore, assuming that the rank of the original matrix is not given, we provide an analysis to determine if the rank of a valid completion is indeed the actual rank of the data corrupted with sparse noise by verifying some conditions.
\end{abstract}

\section{Introduction}

This letter considers the problem of recovering low rank matrix, when corrupted with a sparse noise. This problem is called Robust Matrix Completion. This problem has been studied widely, see for instance  \cite{hsu2011robust,netrapalli2014non,cherapanamjeri2016nearly,chbust,klopp2014robust,chen2013low,yi2016fast}, where probabilistic guarantees for either a convex relaxation based approach or alternating minimization based approach are provided and strong assumptions on the value of the entries are required (like coherence condition). In this letter, we consider the deterministic sampling patterns when the data can be completed with a sparse noise and deterministic and probabilistic guarantees for finite and unique completability are provided.

The study of deterministic sampling patterns is motivated by the results in \cite{charact}, where the authors studied the problem for low rank matrix completion. The necessary and sufficient conditions on the sampling pattern are provided for finite completability in \cite{charact}. Moreover, the sampling probability that ensures finite completability is characterized using the deterministic analysis of the sampling pattern. In this work, we extend these results and analyses on the Grassmannian manifold to the case when the sampled data is corrupted by a sparse noise. 

We further consider the case when each column has certain number of noisy entries and provide bounds for the number of samples required in each column. This result resolves the open question in \cite{cherapanamjeri2016nearly}, where the authors asked if $O(r\log N)$ measurements are enough per column for a $d\times N$ matrix where $d<<N$ and a fraction $O(1/r)$ elements are noisy in each column. We answer the question in positive, further decreasing the number of measurements in each column to $O(\max(r,\log d))$. The main idea is to consider all possibilities of the noise support and make use of the existing fundamental conditions on the sampling pattern for the noiseless scenario.


In many situations, the rank of the sampled matrix is unknown, and depending on the data and the sampled entries, there may be rank-$r$ matrices that agree with the observed entries, even if data is not rank-$r$. Thus, guaranteeing whether if there exist a rank $r$ completion of the data, the rank of original data is indeed $r$ has been studied in  \cite{converse,ashraphijuo5}. In this paper, we will  generalize this approach and results to estimate the rank of the sampled matrix corrupted with a sparse noise.

The rest of the letter is organized as follows. Section II describes the model of robust low-rank matrix completion. Section III gives the deterministic conditions on the sampling patterns when the data has infinite, finite, or unique completions in the presence of sparse noise. These results are then specialized in Section IV to the case when each column of the matrix has at-most $g$ noisy entries. Further, the result is extended to give probabilistic guarantees solving the open problem in \cite{cherapanamjeri2016nearly}. Section V gives conditions to determine whether the rank of matrix is indeed $r$ if there exists a valid completion (which mismatches the observed entries on at most the given support) of rank $r$. Some numerical results are provided in Section \ref{simusec}. Finally, Section VI concludes this paper.

\section{Model and Notations}

Suppose we have a rank $r$ data matrix ${\bf X} \in \mathbb{R}^{d\times N}$ having rank $r$. Suppose the data has an added noise $ {\bf W} \in \mathbb{R}^{d\times N}$ such that $ ||{\bf W}||_0\le s$, where   $ ||{\bf W}||_0$ indicates the number of non-zero entries in $ {\bf W}$. Let ${\bf \Omega} \in \mathbb{R}^{d\times N}$ be a binary matrix which indicates the data points where the data is observed. Let ${\bf A}_{\bf \Omega}$ for  given matrices ${\bf A}$ and  ${\bf \Omega}$ (where  ${\bf \Omega}$ is binary) be the matrix with the elements of ${\bf A}$ corresponding to the entries where  ${\bf \Omega}$ has entry 1, and is zero otherwise. The problem for robust matrix completion is to find the rank $r$ matrix ${\bf X}$ when ${\bf \Omega}$  and $({\bf{ X+W}})_{\bf \Omega}$.

Let $m({\bf A})$ denote the number of rows in ${\bf A}$ and $n({\bf A})$ denote the number of columns in ${\bf A}$. Further, let ${\bf C}(\bf \Omega)$ be a modified matrix from a binary matrix ${\bf \Omega}$ as below. 

Consider the $i$-th column of ${\bf \Omega}$ with $l_i$ sampled entries. We construct $l_i - r$ columns (correspond to the $i$-th column of ${\bf \Omega}$) with binary entries such that each column has exactly $r+1$ entries equal to one. Specifically, assume that $x_1, \dots, x_{l_i}$ are the row indices of all observed entries in this column. Let ${\bf C ( \Omega)}_i$ be the corresponding $d \times (l_i - r)$ matrix to this column which is defined such that for any $j \in \{1,\dots,l_i-r\}$, the $j$-th column has the value $1$ in rows $\{x_1,\dots , x_{r},x_{r+j}\}$ and zeros elsewhere. Finally, define the binary matrix ${\bf C( \Omega)} = \left[{\bf C( \Omega)}_1|{\bf C(\Omega)}_2\dots|{\bf C( \Omega)}_{N} \right]$.


We next define the notion of proper submatrix of ${\bf C( \Omega)}$. 

	\begin{definition}
		A submatrix of ${\bf C( \Omega)}$ is called a proper submatrix if its columns correspond to different columns of the sampling pattern $\mathbf{\Omega}$.
	\end{definition}

In Section IV, we also consider the case that each column has almost $g$ noisy elements. In other words, each column of ${\bf W}$ has L-0 norm less than or equal to $g$. 

\section{Sampling Conditions for Matrix Completion with Noisy Entries}

In this section, we will provide the deterministic conditions on the sampling patterns that determine finite or unique completions, when the observed data is corrupted by a sparse noise.  The following lemma is Theorem 1 in \cite{charact}, which provides the necessary and sufficient combinatorial condition on the sampling for finite completability of the matrix $\mathbf{X}$, where it is assumed to be noiseless, i.e., $s=0$.

\begin{lemma}\label{finthdaniel}
Suppose that the matrix is noiseless, i.e., $s=0$.  Assume that each column of ${\bf{ \Omega}}$ has at least $r$ entries which are $1$. For almost every ${\bf X}$ , there exist at most finitely many rank-$r$ completions of ${\bf X}$ if  and only if the following holds. There exists a proper submatrix $\breve{\bf{ \Omega}}$ formed with $r(d-r)$ columns of ${\bf C}(\breve{\bf{ \Omega}})$ such that every matrix ${\bf{\breve \Omega}}'$ formed with a subset of columns in  $\breve{\bf{ \Omega}}$ satisfies
\begin{equation}\label{fieqdan}
m({\bf{\breve \Omega}}') \ge n({\bf{\breve \Omega}}')/r+r.
\end{equation}
\end{lemma}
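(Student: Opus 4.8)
The plan is to translate finite completability into a statement about the dimension of an algebraic variety on the Grassmannian $\mathrm{Gr}(r,d)$, and then to show that the stated combinatorial bound is exactly the condition under which the defining polynomials become algebraically independent. Writing $\mathbf{X}=\mathbf{U}\mathbf{V}^\top$ with $\mathbf{U}\in\mathbb{R}^{d\times r}$, the set of rank-$r$ completions consistent with the observed entries is in bijection (for almost every $\mathbf{X}$, since each column with $l_i\ge r$ samples determines its coefficient vector once $U=\mathrm{col}(\mathbf{U})$ is fixed) with the set of $r$-dimensional subspaces $U$ agreeing with the data. Hence ``finitely many completions'' is equivalent to this solution set being $0$-dimensional. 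Since $\dim\mathrm{Gr}(r,d)=r(d-r)$, I would parametrize $U$ in a generic chart (top $r\times r$ block normalized to the identity), leaving exactly $r(d-r)$ free coordinates.

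Next I would encode each observed entry beyond the first $r$ in a column as one polynomial constraint on these coordinates, and argue that this is precisely what $\mathbf{C}(\boldsymbol{\Omega})$ records. Concretely, if column $i$ is observed in rows $x_1,\dots,x_{l_i}$, then the observed subvector must lie in the column span of $\mathbf{U}$ restricted to those rows; for each $j$ the vanishing of the $(r+1)\times(r+1)$ minor built from the base rows $x_1,\dots,x_r$ together with $x_{r+j}$ gives one equation, and this minor is supported exactly on the $r+1$ rows marked in the $j$-th column of $\mathbf{C}(\boldsymbol{\Omega})_i$. Thus each column of $\mathbf{C}(\boldsymbol{\Omega})$ corresponds to one polynomial, a selection of columns corresponds to a subsystem, and $m(\cdot)$ counts the rows (coordinates) that the chosen columns actually touch.

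The core of the argument is the equivalence between algebraic independence of these polynomials and the counting inequality. For the easy direction I would observe that a subsystem whose columns touch only $m$ rows constrains only the projection of $U$ onto those rows, which lives in $\mathrm{Gr}(r,m)$ of dimension $r(m-r)$; a collection of $n$ equations on an $r(m-r)$-dimensional space can be algebraically independent only if $n\le r(m-r)$, i.e. $m\ge n/r+r$. Therefore if the solution variety is $0$-dimensional we may extract $r(d-r)$ independent equations, and no subset of them can violate the bound, giving necessity. For sufficiency I would show that the Hall-type condition ``every subset of columns touches at least $n/r+r$ rows'' forces the Jacobian of the $r(d-r)$ chosen polynomials to have full rank for generic data: the bound yields, via a bipartite matching argument, an assignment of a distinct coordinate to each equation so that the Jacobian determinant is a nonzero polynomial in the entries of $\mathbf{X}$, hence nonvanishing off a measure-zero set. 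Full Jacobian rank at a generic point cuts the $r(d-r)$-dimensional Grassmannian down to dimension $0$, which is finite completability.

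The main obstacle is this sufficiency direction, namely converting the purely combinatorial bound into algebraic independence of the specific minor-polynomials. The delicate points are (i) verifying that the matching guaranteed by the subset inequality can be realized as a nonvanishing term of the Jacobian determinant, so that no cancellation among monomials kills the surviving term, and (ii) controlling genericity uniformly so that a single exceptional set works for almost every $\mathbf{X}$; here the role of the proper-submatrix restriction is to guarantee that the constraints being counted are the genuine $l_i-r$ conditions per column rather than redundant copies, which is what makes the dimension count tight. I would handle genericity by noting that the Jacobian determinant is a fixed polynomial whose zero set is a proper subvariety of measure zero, and handle the matching by an inductive exchange argument on the row--column incidence graph of $\breve{\boldsymbol{\Omega}}$.
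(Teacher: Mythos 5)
You should first know that the paper contains no proof of this statement to compare against: Lemma~\ref{finthdaniel} is quoted verbatim as Theorem~1 of \cite{charact}, and the letter uses it as a black box. Your sketch is therefore best judged against the proof in that cited reference, and in outline it reconstructs the same framework: parametrize the Grassmannian by an $r(d-r)$-dimensional chart, encode each column of ${\bf C}({\bf \Omega})$ as the vanishing of an $(r+1)\times(r+1)$ minor, and translate finite completability into the existence of $r(d-r)$ algebraically independent constraint polynomials. Your necessity direction is essentially sound: $n$ constraints supported on $m$ rows factor through the projection onto an $r(m-r)$-dimensional space, so algebraic independence forces $n\le r(m-r)$, which is \eqref{fieqdan}; and if fewer than $r(d-r)$ of the constraints are algebraically independent, the generic fiber of the constraint map has positive dimension, so finite completability lets you extract such a family.

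The genuine gap is the sufficiency direction, which you yourself flag as the main obstacle and then leave as a plan rather than a proof. The mechanism you propose --- Hall matching of equations to distinct coordinates (which your subset inequality does supply, since $n'$ equations touch at least $r(m'-r)\ge n'$ coordinates), hence a supported generalized diagonal of the Jacobian, hence a nonzero Jacobian determinant --- does not go through as stated. The determinant is a signed sum over \emph{all} generalized diagonals, and the Jacobian entries here are not independent indeterminates: they are partial derivatives of closely related minors that share common subdeterminants as factors, so the one diagonal produced by the matching can perfectly well be cancelled by the others. (Your argument would be valid for the permanent, or if the entries were generic, but neither holds.) You acknowledge this cancellation issue in point (i) of your obstacles, but acknowledging it is not resolving it; closing exactly this hole is the multi-lemma technical core of \cite{charact}, which requires a careful inductive analysis of the pattern showing that the constraint polynomials can be ordered so that each is transcendental over the field generated by its predecessors, and this is where the proper-submatrix restriction and the genericity of ${\bf X}$ do real work. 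Two smaller holes: (i) in the necessity direction you must also show that the $r(d-r)$ independent constraints can be chosen to form a \emph{proper} submatrix in the sense of Definition~1, not merely some subset of columns of ${\bf C}({\bf \Omega})$; (ii) the bijection between rank-$r$ completions and consistent subspaces, and the existence of a single measure-zero exceptional set valid for all of the finitely many combinatorial choices, are asserted rather than argued. As it stands, your proposal is a correct road map whose hardest segment is missing.
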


The following theorem characterizes the conditions on sampling patterns, which results in finite completability for arbitrary values of $s$ (corrupted with a sparse noise). The main idea is to consider all possibilities of the noise support and make use of the existing fundamental conditions on the sampling pattern for the noiseless scenario.

\begin{theorem}[Deterministic Finite Completions]\label{finth}
Assume that each column of ${\bf{ \Omega}}$ has at least $r+s$ entries which are $1$. For almost every ${\bf X}$  and $ {\bf W}$, there exist at most finitely many rank-$r$ completions of ${\bf X}$ if  the following holds. For each $\widehat{\bf{ \Omega}}$ such that $||\widehat{\bf{ \Omega}}||_0 = ||{\bf{ \Omega}}|| -s $ and the entry of $\widehat{\bf{ \Omega}}$ is zero if the corresponding entry of ${\bf{ \Omega}}$ is zero, there exists a proper submatrix $\breve{\bf{ \Omega}}$ formed with $r(d-r)$ columns of ${\bf C}(\widehat{\bf{ \Omega}})$ such that every matrix ${\bf{ \Omega}}'$ formed with a subset of columns in  $\breve{\bf{ \Omega}}$ satisfies
\begin{equation}\label{fieq}
m({\bf{ \Omega}}') \ge n({\bf{ \Omega}}')/r+r, 
\end{equation}
\end{theorem}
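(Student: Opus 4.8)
The plan is to reduce the robust problem to finitely many applications of the noiseless Lemma~\ref{finthdaniel}. First I would pin down what ``rank-$r$ completion'' means once a sparse noise is present: a rank-$r$ matrix ${\bf Y}$ is a valid completion when it reproduces the observed data $({\bf X}+{\bf W})_{\bf \Omega}$ on all but at most $s$ of the sampled positions, i.e.\ $\|({\bf Y}-{\bf X}-{\bf W})_{\bf \Omega}\|_0 \le s$. Any such ${\bf Y}$ therefore agrees with the observations on a sub-pattern $\widehat{\bf \Omega}$ obtained from ${\bf \Omega}$ by switching exactly $s$ of its ones to zeros (if ${\bf Y}$ matches on more than $\|{\bf \Omega}\|_0-s$ entries, keep any such $\widehat{\bf \Omega}$). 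Consequently the set of all valid completions is the union, over the finitely many admissible sub-patterns $\widehat{\bf \Omega}$ with $\|\widehat{\bf \Omega}\|_0 = \|{\bf \Omega}\|_0 - s$, of the ordinary noiseless completion sets of the data $({\bf X}+{\bf W})_{\widehat{\bf \Omega}}$ on the pattern $\widehat{\bf \Omega}$.

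The second step is to control each term of this union with Lemma~\ref{finthdaniel}. The hypothesis that every column of ${\bf \Omega}$ carries at least $r+s$ ones is precisely what guarantees that, no matter how the $s$ deleted entries are distributed (even if all fall in one column), every column of $\widehat{\bf \Omega}$ still retains at least $r$ ones; hence each $\widehat{\bf \Omega}$ satisfies the column hypothesis of the lemma. The combinatorial requirement---a proper submatrix $\breve{\bf \Omega}$ built from $r(d-r)$ columns of ${\bf C}(\widehat{\bf \Omega})$ for which every subset of columns obeys \eqref{fieq}---is exactly the assumption imposed in the theorem for each $\widehat{\bf \Omega}$. Lemma~\ref{finthdaniel} then yields that, for generic data on $\widehat{\bf \Omega}$, there are at most finitely many rank-$r$ matrices consistent with that data.

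The third step handles genericity and assembles the conclusion. For the particular sub-pattern $\widehat{\bf \Omega}^\star$ that deletes (a superset of) the true support of ${\bf W}$, the observed values coincide with ${\bf X}_{\widehat{\bf \Omega}^\star}$, which is generic because ${\bf X}$ is; Lemma~\ref{finthdaniel} applies verbatim, the corresponding completion set is finite, and it contains the ground truth ${\bf X}$. For a sub-pattern $\widehat{\bf \Omega}$ that still retains some corrupted entries, the observed block is $({\bf X}+{\bf W})_{\widehat{\bf \Omega}}$; this point either fails to lie in the image of the rank-$r$ variety under sampling on $\widehat{\bf \Omega}$, in which case its completion set is empty, or it is a generic point of that image, in which case the fiber is again finite by the lemma. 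In either case the term is finite, and a finite union of finite sets is finite, which is the claimed conclusion.

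Finally, the step I expect to be the real obstacle is the genericity bookkeeping that makes ``for almost every ${\bf X}$ and ${\bf W}$'' simultaneously valid across all $\binom{\|{\bf \Omega}\|_0}{s}$ sub-patterns. Each invocation of Lemma~\ref{finthdaniel} excludes a measure-zero set of data values, and I must pull these back through the affine maps $({\bf X},{\bf W})\mapsto ({\bf X}+{\bf W})_{\widehat{\bf \Omega}}$ and argue that their union remains measure zero in the parameter space of rank-$r$ matrices ${\bf X}$ and $s$-sparse matrices ${\bf W}$. The delicate point is precisely the noise-containing sub-patterns, where one must verify that adding the (independent, generic) nonzero entries of ${\bf W}$ cannot push a positive-measure family of parameters onto the exceptional locus, so that the finiteness conclusion of the lemma survives the perturbation.
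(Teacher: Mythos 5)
Your proposal is correct and rests on the same key reduction as the paper---applying the noiseless Lemma~\ref{finthdaniel} to sub-patterns $\widehat{\bf \Omega}$ of ${\bf \Omega}$ with $s$ ones removed---but it takes a genuinely finer route, and the difference is worth recording. The paper's entire proof invokes the hypothesis and Lemma~\ref{finthdaniel} exactly \emph{once}, for the single clean sub-pattern $\widehat{\bf \Omega}^\star$ whose retained entries are precisely the noiseless ones; this implicitly reads ``rank-$r$ completion'' as ``rank-$r$ matrix agreeing with all uncorrupted observations,'' and under that reading nothing more is required. You instead adopt the more natural and stronger reading---agreement with the corrupted observations up to at most $s$ mismatches---which forces the union over all $\binom{\|{\bf \Omega}\|_0}{s}$ sub-patterns and obliges you to control the noise-containing terms as well; notably, yours is the only argument of the two that actually uses the theorem's hypothesis for \emph{every} $\widehat{\bf \Omega}$ rather than just the clean one. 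The obstacle you flag at the end (genericity of $({\bf X}+{\bf W})_{\widehat{\bf \Omega}}$ inside the image of the sampled rank-$r$ variety) is real but closable: the hypothesis together with Lemma~\ref{finthdaniel} makes the generic fiber over that image finite, so the infinite-fiber locus is a proper subvariety, and since the parametrization $({\bf X},{\bf W}) \mapsto ({\bf X}+{\bf W})_{\widehat{\bf \Omega}}$ has an irreducible domain (rank-$r$ variety times the noise coordinates), the preimage of that locus is null. Alternatively, the paper itself supplies the missing ingredient not here but in the proof of Theorem~\ref{unith}, where it argues that a sub-pattern retaining a generic noisy entry admits, with probability one, \emph{no} consistent rank-$r$ matrix, which would make your noise-containing terms empty rather than merely finite. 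In short, your route establishes a stronger conclusion than the paper's three-sentence proof actually proves, at the cost of the genericity bookkeeping you correctly identified as the crux.
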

\begin{proof}
Recall that in our model there exists at most $s$ noisy observed entries among all sampled entries (non-zero entries of $\mathbf{\Omega}$). Hence, there exists $\widehat{\bf{ \Omega}}$ such that $||\widehat{\bf{ \Omega}}||_0 = ||{\bf{ \Omega}}|| -s $ and all sampled entries corresponding to the entries of $\widehat{\bf{ \Omega}}$ are noiseless. Moreover, according to the assumption of theorem, there exists a matrix $\breve{\bf{ \Omega}}$ formed with $r(d-r)$ columns of ${\bf C}(\widehat{\bf{ \Omega}})$ such that every matrix ${\bf{ \Omega}}'$ formed with a subset of columns in  $\breve{\bf{ \Omega}}$ satisfies \eqref{fieq}. Then, according to Lemma \ref{finthdaniel}, $\mathbf{X}$ is finitely many completable with probability one.
\end{proof}

\begin{remark}
The converse statement of Theorem \ref{finth} holds probabilistically and not necessarily deterministically anymore (with probability one). Because given that for some  $\widehat{\bf \Omega}$, \eqref{fieq} does not hold, with some probability the noise is at the $s$ entries where $\bf \Omega$ is $1$ and $\widehat{\bf \Omega}$ is $0$. This probability depends on the location of the nonzero entries of $\mathbf{W}$. In general there are ${dN}\choose{s}$ possibilities for the location of the noisy entries. So, the converse statement holds true with some probability between $0$ and $1$ depending on the location of noisy entries.
\end{remark}

The following lemma is Theorem 2 in \cite{charact}, which provides the sufficient combinatorial condition on the sampling for unique completability of the matrix $\mathbf{X}$, where it is assumed to be noiseless, i.e., $s=0$.

\begin{lemma}\label{uniqthdaniel}
Suppose that the matrix is noiseless, i.e., $s=0$.  Assume that each column of ${\bf{ \Omega}}$ has at least $r$ entries which are $1$. For almost every ${\bf X}$ , there exist at most finitely many rank-$r$ completions of ${\bf X}$ if  and only if the following holds. There exist disjoint proper submatrices $\breve{\bf{ \Omega}}$ and $\breve{\bf{ \Omega}}_1$ formed with $r(d-r)$ and $(d-r)$ columns of ${\bf C}(\breve{\bf{ \Omega}})$, respectively, such that every matrix ${\bf{\breve \Omega}}'$ formed with a subset of columns in  $\breve{\bf{ \Omega}}$ satisfies
\begin{equation}\label{fieqdanuni1}
m({\bf{\breve \Omega}}') \ge n({\bf{\breve \Omega}}')/r+r, 
\end{equation}
and every matrix ${\bf{\breve \Omega}}'_1$ formed with a subset of columns in  $\breve{\bf{ \Omega}}_1$ satisfies
\begin{equation}\label{fieqdanuni2}
m({\bf{\breve \Omega}}'_1) \ge n({\bf{\breve \Omega}}'_1) +r.
\end{equation}
\end{lemma}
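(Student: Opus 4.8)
The plan is to lift the problem to the Grassmannian $\mathrm{Gr}(r,d)$ of $r$-dimensional subspaces of $\mathbb{R}^d$ and to read each observed entry as a polynomial constraint on the column space, exactly the geometric set-up behind Lemma~\ref{finthdaniel}. Writing a generic rank-$r$ matrix as $\mathbf{X}=\mathbf{U}\mathbf{\Theta}$ with $\mathbf{U}\in\mathbb{R}^{d\times r}$ a basis of the column space and $\mathbf{\Theta}\in\mathbb{R}^{r\times N}$ the coefficients, a generic affine chart lets me coordinatize the manifold by a $(d-r)\times r$ matrix $\mathbf{B}$, i.e.\ by $r(d-r)$ free parameters. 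For a column $i$ observed on rows $\{x_1,\dots,x_{l_i}\}$, eliminating the $r$ local coefficients $\boldsymbol{\theta}_i$ yields $l_i-r$ determinantal equations in $\mathbf{B}$, the $j$-th of which involves precisely the $r+1$ rows $\{x_1,\dots,x_r,x_{r+j}\}$ --- the support of the $j$-th column of $\mathbf{C}(\mathbf{\Omega})_i$. Thus each column of $\mathbf{C}(\mathbf{\Omega})$ names one polynomial constraint on the column space, and the algebraic independence of a family of such constraints, for almost every $\mathbf{X}$, is controlled by the Hall-type counting inequalities \eqref{fieqdanuni1}--\eqref{fieqdanuni2} imposed on the proper submatrices indexing them.

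With this dictionary the finite-completability half is immediate: the hypothesis on $\breve{\mathbf{\Omega}}$ is verbatim the condition of Lemma~\ref{finthdaniel}, so the $r(d-r)$ constraints it indexes are generically independent and cut $\mathrm{Gr}(r,d)$ down to a zero-dimensional, hence finite, set of candidate column spaces $\mathbf{U}^{(1)},\dots,\mathbf{U}^{(k)}$, among which lies the true one; the finitely many rank-$r$ completions sit over these candidates. Since every column carries at least $r$ observations, fixing any single admissible $\mathbf{U}^{(t)}$ determines the coefficients $\boldsymbol{\theta}_i$ uniquely, so distinct completions correspond to distinct column spaces. The entire content of the uniqueness claim is therefore to show that the second block $\breve{\mathbf{\Omega}}_1$ forces $k=1$.

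For this I would argue by separation. The true completion satisfies every equation arising from $\breve{\mathbf{\Omega}}_1$, and I must show that for almost every $\mathbf{X}$ none of the spurious spaces $\mathbf{U}^{(t)}$, $t\ge 2$, does. Equivalently, a second rank-$r$ completion $\mathbf{X}'$ would make $\mathbf{D}=\mathbf{X}-\mathbf{X}'$ a matrix of rank at most $2r$ that vanishes on the support of $\mathbf{\Omega}$ and, once the finite-completability constraints are in force, is compelled to vanish also on the further entries picked out by $\breve{\mathbf{\Omega}}_1$. The stronger inequality $m(\breve{\mathbf{\Omega}}'_1)\ge n(\breve{\mathbf{\Omega}}'_1)+r$ in \eqref{fieqdanuni2} --- note the absence of the $1/r$ factor present in \eqref{fieqdanuni1} --- is exactly the transversal/matching condition guaranteeing that these $d-r$ new constraints are functionally independent modulo the variety already cut out by the first $r(d-r)$ constraints. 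They therefore strictly shrink the admissible set; since that set was already finite, they collapse it to the single true point. The assumed disjointness of $\breve{\mathbf{\Omega}}$ and $\breve{\mathbf{\Omega}}_1$ is what allows the two constraint families to be counted independently in this step.

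The main obstacle is precisely this separation step. The subtlety is not that the $d-r$ new equations are individually generic, but that they must stay informative after the column space has already been reduced to finitely many candidates whose locations themselves move with $\mathbf{X}$; I must rule out the degenerate event that a spurious $\mathbf{U}^{(t)}$ happens to satisfy all of $\breve{\mathbf{\Omega}}_1$ as well. This is where the genericity of $\mathbf{X}$ (and $\mathbf{W}$) is indispensable --- the bad configurations form a proper, positive-codimension subvariety --- and where the precise strength $m\ge n+r$, matching the $d-r$ directions transverse to an $r$-dimensional column in $\mathbb{R}^d$, must be used in place of the weaker $m\ge n/r+r$. Converting the combinatorial inequality on $\breve{\mathbf{\Omega}}_1$ into this generic separation statement, most naturally by re-running the constructive selection argument behind Lemma~\ref{finthdaniel} on the second block, is the technical heart of the proof.
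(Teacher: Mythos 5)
Your framework is the right one, and it is worth noting what you are actually reconstructing: the paper itself gives \emph{no} proof of Lemma~\ref{uniqthdaniel}; it is stated as a restatement of Theorem~2 in \cite{charact}, so your proposal must stand on its own against that cited argument. (Incidentally, the lemma as printed carries copy-paste slips from Lemma~\ref{finthdaniel} --- ``at most finitely many'' and ``if and only if'' should read ``a unique completion'' and ``if,'' since Theorem~2 of \cite{charact} is a sufficient condition for uniqueness; your reading of it as a uniqueness claim is the intended one.) Your setup matches the cited proof: the Grassmannian coordinatization, the identification of each column of $\mathbf{C}(\mathbf{\Omega})$ with one polynomial constraint on the column space, the use of the block $\breve{\mathbf{\Omega}}$ together with Lemma~\ref{finthdaniel} to cut the candidate column spaces down to a finite set, and the observation that distinct completions correspond to distinct candidate spaces because every column carries at least $r$ observations.

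The gap is in the elimination step, and it is genuine. From ``the $d-r$ constraints of $\breve{\mathbf{\Omega}}_1$ are functionally independent modulo the variety cut out by $\breve{\mathbf{\Omega}}$'' you conclude that they ``strictly shrink the admissible set'' and hence, the set being finite, ``collapse it to the single true point.'' That inference fails twice over. First, strictly shrinking a $k$-point set yields at most $k-1$ points, not one; you must show that \emph{every} spurious candidate $\mathbf{U}^{(t)}$, $t \ge 2$, violates at least one constraint indexed by $\breve{\mathbf{\Omega}}_1$, which is an argument per candidate, not a dimension count. Second, the candidates are not fixed points against which the new constraints can be declared generic: both the spurious spaces and the polynomials of $\breve{\mathbf{\Omega}}_1$ are functions of the same data $\mathbf{X}_{\mathbf{\Omega}}$, so independence of the constraints as hypersurfaces on the Grassmannian does not by itself imply that a data-dependent point avoids a data-dependent hypersurface. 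What is needed --- and what the proof of Theorem~2 in \cite{charact} actually supplies --- is a joint measure-zero argument: the set of matrices $\mathbf{X}$ for which some spurious candidate satisfies all constraints of $\breve{\mathbf{\Omega}}_1$ lies in a proper subvariety, with the counting condition \eqref{fieqdanuni2} used to exhibit, for each spurious candidate, a constraint whose restriction to the relevant fiber is not identically zero. You name exactly this step as ``the technical heart'' and defer it; since that step is the entire content of the uniqueness claim beyond Lemma~\ref{finthdaniel}, the proposal is an outline of the cited proof with its central step missing rather than a proof.
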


We will next show that if $s+1$ entries are removed rather than $s$ entries and the above guarantees hold, then the  support of ${\bf W}$ (or a superset of it if the support of ${\bf W}$  is smaller than $s$) can be obtained.  Having identified the support of ${\bf W}$, we get the conditions of unique completion as follows.

\begin{theorem}[Deterministic Unique Completion]\label{unith}
Assume that each column of ${\bf{ \Omega}}$ has at least $r+s+1$ entries which are $1$. Suppose that for each $\widehat{\bf{ \Omega}}$ such that $||\widehat{\bf{ \Omega}}||_0 = ||{\bf{ \Omega}}|| -(s+1) $ and the entry of $\widehat{\bf{ \Omega}}$ is zero if the corresponding entry of ${\bf{ \Omega}}$ is zero, if ${\bf C}(\widehat{\bf{ \Omega}})$ contains two disjoint proper submatrices: $\breve{\bf{ \Omega}}$ formed with $r(d-r)$ columns and $\bar{\bf{ \Omega}}$ formed with $(d-r)$ columns, such that

(i) every matrix ${\bf{ \Omega}}'$ formed with a subset of columns in  $\breve{\bf{ \Omega}}$ satisfies
\begin{equation}\label{ident}
m({\bf{ \Omega}}') \ge n({\bf{ \Omega}}')/r+r,
\end{equation}
and 
(ii)  every matrix ${\bf{ \Omega}}'$ formed with a subset of columns in  $\bar{\bf{ \Omega}}$ satisfies
\begin{equation}\label{uniadeq}
m({\bf{ \Omega}}') \ge n({\bf{ \Omega}}')+r.
\end{equation}
Then, almost every rank-$r$ matrix ${\bf X}$ can be recovered from noise where the entries of ${\bf W}$ are generically chosen. 
\end{theorem}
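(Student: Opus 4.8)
The plan is to reduce the corrupted unique-completion problem to the noiseless one of Lemma~\ref{uniqthdaniel} by first localizing the corrupted observations. Since $\|\mathbf{W}\|_0 \le s$, at most $s$ of the sampled entries are noisy, so among all ways of deleting $s+1$ sampled entries there is at least one deletion set $T$ containing every noisy observation; deleting one more entry than the number of corruptions both guarantees that such a $T$ exists and forces $T$ to discard at least one noiseless entry. For the pattern $\widehat{\mathbf{\Omega}}$ obtained from this $T$ we have $\|\widehat{\mathbf{\Omega}}\|_0=\|\mathbf{\Omega}\|_0-(s+1)$ and every surviving observation equals the corresponding entry of the true rank-$r$ matrix $\mathbf{X}$. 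By hypothesis $\mathbf{C}(\widehat{\mathbf{\Omega}})$ contains the disjoint submatrices $\breve{\mathbf{\Omega}}$ and $\bar{\mathbf{\Omega}}$ meeting \eqref{ident} and \eqref{uniadeq}, so Lemma~\ref{uniqthdaniel} applies verbatim and the surviving entries admit a \emph{unique} rank-$r$ completion, which is therefore $\mathbf{X}$ itself. The assumption that each column carries at least $r+s+1$ samples is exactly what keeps every column of $\widehat{\mathbf{\Omega}}$ with at least $r$ observations, so that $\mathbf{C}(\widehat{\mathbf{\Omega}})$ is well defined. Thus at least one deletion returns the truth; the remaining work is to certify which deletions to trust using only the data.

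The crux is a separation statement: for almost every $\mathbf{X}$ and generic $\mathbf{W}$, a size-$(s+1)$ deletion $S$ leaves a consistent rank-$r$ completion of the surviving observations \emph{if and only if} $S$ contains the support of the corruptions. The forward direction is the construction above. For the converse, suppose $S$ retains a noisy entry. Condition \eqref{ident} already forces the sampling map attached to $\widehat{\mathbf{\Omega}}$ to be generically finite-to-one on the rank-$r$ determinantal variety (this is the content of Lemma~\ref{finthdaniel}), so its image in the observation space is a subvariety whose dimension equals that of the rank-$r$ variety and, by the over-determination built into the finite-completability condition, is of positive codimension. The surviving value vector equals $\mathbf{X}$ restricted to $\widehat{\mathbf{\Omega}}$ -- a point of this image -- displaced in the coordinates indexed by the retained noisy entries by the generic values of $\mathbf{W}$. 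I would then argue, by a transversality and dimension count, that for almost every $\mathbf{X}$ and generic $\mathbf{W}$ this displacement leaves the image subvariety, so no rank-$r$ matrix can match all surviving observations and $S$ is rejected as inconsistent.

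Granting the separation statement, the recovery is immediate. Every deletion admitting a consistent rank-$r$ completion contains the true corruption support, and by Lemma~\ref{uniqthdaniel} each such deletion returns the same completion $\mathbf{X}$; every deletion that retains a corruption is inconsistent and is discarded, so no spurious rank-$r$ candidate $\mathbf{X}'\neq\mathbf{X}$ survives. Hence $\mathbf{X}$ is pinned down uniquely, and the support of $\mathbf{W}$ (a set of size at most $s$, possibly identified only up to the superset $T$ when fewer than $s$ entries are corrupted) is read off as the sampled positions at which the data disagree with the recovered $\mathbf{X}$ -- exactly the ``identify the support, then complete'' scheme announced before the statement.

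The step I expect to be the main obstacle is the converse of the separation statement: showing that a single retained, generically valued corruption genuinely forces inconsistency with the rank-$r$ variety, rather than merely failing to be explained by the \emph{particular} completion obtained from the noiseless entries. This is where genericity of both $\mathbf{X}$ and $\mathbf{W}$ is indispensable, and where the margin created by deleting $s+1$ rather than $s$ entries -- together with the $r+s+1$ samples per column -- supplies the extra noiseless observation that calibrates the completion and makes the transversality argument go through. The surrounding bookkeeping -- existence of a good deletion, the application of Lemma~\ref{uniqthdaniel} to it, and the agreement of all good deletions -- is routine once this transversality is established.
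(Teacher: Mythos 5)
Your overall architecture is the same as the paper's: test deletion patterns for rank-$r$ consistency, prove a separation (consistent $\iff$ the deleted set contains the noise support), thereby identify the support, and then invoke Lemma~\ref{uniqthdaniel} on the cleaned pattern. The forward direction and the final assembly are fine. The genuine gap is exactly the step you flag yourself---the converse of the separation---and the dimension count you sketch cannot close it, for two reasons. First, condition \eqref{ident} alone does not give positive codimension of the image $\pi_{\widehat{\mathbf{\Omega}}}(M_r)$ of the rank-$r$ variety in observation space: finite completability only makes the projection generically finite-to-one, so the image has dimension $r(d+N-r)$, while the number of surviving observations under \eqref{ident} alone can be exactly $Nr+r(d-r)=r(d+N-r)$; in that case the image is dense, a generically displaced point \emph{does} lie in it, and the noise-retaining deletion remains consistent. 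You need the extra $(d-r)$ columns from \eqref{uniadeq} even to get the codimension claim. Second, and more fundamentally, global positive codimension is not enough: the noise displaces the data only in the coordinates of the \emph{retained} noisy entries, with every other coordinate pinned to a value of $\mathbf{X}$. A positive-codimension variety can perfectly well contain the whole affine subspace through $\pi(\mathbf{X})$ in those particular directions (a ``cylinder''), in which case every displacement stays inside the image. What must actually be proved is that the \emph{slice} of the image, through the clean values, in the retained-noise directions is Lebesgue-null; that is not a soft transversality fact, it is the crux of the theorem.

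The paper closes this hole with an elementary mechanism that your write-up is missing. It runs the identification over deletions of size $s$ (not $s+1$): if such a pattern $\underline{\mathbf{\Omega}}$ retains a noisy entry, delete that single entry as well; the result is a size-$(s+1)$ deletion, which \emph{is} covered by the hypothesis, so \eqref{ident} together with Lemma~\ref{finthdaniel} yields \emph{finitely many} rank-$r$ matrices fitting the remaining values. Each of these finitely many candidates determines one number at the removed position, while the observed value there is $\mathbf{X}$ plus an independent, generically chosen noise term, which avoids any fixed finite set with probability one. Hence no rank-$r$ matrix fits all of $\underline{\mathbf{\Omega}}$, i.e., the slice above is finite---this is precisely where the quantifier ``for each $\widehat{\mathbf{\Omega}}$ with $s+1$ fewer entries'' is consumed. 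Your choice to run the separation over size-$(s+1)$ deletions puts this trick out of reach: the analogous one-more-removal step would require the finite-completability condition for size-$(s+2)$ deletions, which the theorem does not assume. So either restructure the identification at size $s$ as the paper does, or supply an independent proof that the slice is null; your sketch does neither, and without it no spurious consistent deletion can be excluded.
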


\begin{proof}
As the first step of the proof, given condition \eqref{ident}, we provide a simple algorithm, which identifies the support of noisy entries $\mathbf{W}_{\mathbf{\Omega}}$. The algorithm of completion that we use is the same as above, using every  $\underline{\bf{ \Omega}}$ that has $s$ less entries. To see this, first assume that $|| {\bf W}_{\bf \Omega}||_0=s$.  We note that if the  chosen set $\underline{\bf{ \Omega}}$ is  the same as value of ${\bf{ \Omega}}$ where the noise $ {\bf W}$ entries are removed, there are finite completions by \cite{charact}. However, if the above is not the case, the inherent rank of data with the known entries in  $\underline{\bf{ \Omega}}$ is greater than $r$ since the entries of a matrix with rank $r$ were corrupted by generic entries. We note that since the entries have to match for $\underline{\bf{ \Omega}}$, and if we remove one of the noisy entry (entry in ${\bf W}_{\bf \Omega}$ but not in $\underline{\bf{ \Omega}}$) from $\underline{\bf{ \Omega}}$ to obtain $\widehat{\bf{ \Omega}}$, there are at most a finite number of completions fitting the missing entries.
	
Similarly we can show that if  $|| {\bf W}_{\bf \Omega}||_0 = s-i$, there are finitely many completions since the set of $s$ removals in $\underline{\bf{ \Omega}}$ contain non-zero entries of ${\bf W}_{\bf \Omega}$, $i = 1,\dots,s-1$. Hence, for each value of $i$ that $|| {\bf W}_{\bf \Omega}||_0 = s-i$, there exist exactly one possible support of $W$, which we have identified. Note that the finite sum of finite numbers is also a finite number, and therefore we showed the finite completability for $|| {\bf W}_{\bf \Omega}||_0 \leq s$. This finite number of completions will not match the entry at the noisy part with probability 1. Thus, there cannot be any possible completion with a rank-$r$ matrix which matches all entries of  $\underline{\bf{ \Omega}}$. Hence, we can identify the support of the noise $ {\bf W}$, and therefore condition \eqref{uniadeq} in the statement of the Theorem guarantees unique completability following Lemma \ref{uniqthdaniel}.
\end{proof}

We now restate Theorem 3 in \cite{charact} as the following lemma.

\begin{lemma}\label{danthmmatfin}
Suppose that the matrix is noiseless, i.e., $s=0$. Assume $r \leq \frac{d}{6}$ and that each column of the sampled matrix is observed in at least $l$ entries, uniformly at random and independently across entries, where
\begin{eqnarray}\label{promatrixdanpro}
l > \max\left\{12 \ \log \left( \frac{d}{\epsilon} \right) + 12, 2r\right\}. 
\end{eqnarray}
Also, assume that $ r(d-r) \leq N$. Then, with probability at least $1 - \epsilon$, the assumption on the sampling pattern given in Lemma \ref{finthdaniel} holds, i.e., $\mathbf{X}$ is finitely many completable. Moreover, $ (r+1)(d-r) \leq N$ ensures that with probability at least $1 - \epsilon$, $\mathbf{X}$ is uniquely completable.
\end{lemma}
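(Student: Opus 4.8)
The plan is to verify that the purely combinatorial hypotheses of Lemma~\ref{finthdaniel} (and, for the second claim, of Lemma~\ref{uniqthdaniel}) are met with probability at least $1-\epsilon$ under the random column-sampling model; the statement is precisely the probabilistic counterpart of the deterministic conditions, and the argument follows \cite{charact}. Since each column is observed in at least $l$ entries and $l>2r$, every block $\mathbf{C}(\mathbf{\Omega})_i$ contributes at least $l-r>r$ candidate columns, each supported on the $r$ shared ``anchor'' rows $x_1,\dots,x_r$ together with one ``free'' row $x_{r+j}$. Because $N\ge r(d-r)$, there are enough distinct original columns to assemble a proper submatrix $\breve{\mathbf{\Omega}}$ with exactly $r(d-r)$ columns, one per chosen original column, so the task reduces to showing that such a choice can be made to satisfy the expansion inequality $m(\mathbf{\Omega}')\ge n(\mathbf{\Omega}')/r+r$ on every subset $\mathbf{\Omega}'$ with high probability.

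First I would reformulate the inequality as a neighborhood-expansion condition on the bipartite graph whose left vertices are the $d$ rows and whose right vertices are the selected columns, each of degree $r+1$; the full submatrix $(n=r(d-r))$ must touch all $d$ rows, while intermediate subsets must touch at least $n(\mathbf{\Omega}')/r+r$ rows. I would then bound the probability of the complementary event by a union bound: a violating subset of $n$ columns is one whose free rows, together with the anchors, collectively hit fewer than $n/r+r$ rows, which forces many of the (conditionally near-uniform) free entries to land in a small prescribed row set. Each such configuration has an exponentially small probability controlled by a binomial/Chernoff tail, and I would sum over the $\binom{d}{m}$ choices of the admissible row set and the corresponding choices of columns.

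The quantitative heart is choosing parameters so this sum is at most $\epsilon$. The hypothesis $r\le d/6$ controls the binomial coefficients counting admissible row sets, while $l>12\log(d/\epsilon)+12$ guarantees that each column carries enough independent samples for the per-configuration tail to beat the count; the constants absorb the $\log(d/\epsilon)$ factor together with the combinatorial overhead. For the second claim I would repeat the argument for a second, disjoint proper submatrix $\bar{\mathbf{\Omega}}$ of $(d-r)$ columns required to satisfy the strictly stronger condition $m(\mathbf{\Omega}')\ge n(\mathbf{\Omega}')+r$ of Lemma~\ref{uniqthdaniel}; the sharper bound $(r+1)(d-r)\le N$ merely supplies the extra $(d-r)$ distinct columns needed to keep the two submatrices disjoint, and an identical tail estimate controls its failure probability, the two events being combined by a final union bound.

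I expect the main obstacle to be making the union bound converge uniformly over the full range of subset sizes. The additive slack $+r$ must be covered for small subsets while the multiplicative term $n/r$ must dominate for large ones, so the tail estimate has to be split into regimes and the worst case identified; this is where the factor in $l$ and the constraint $r\le d/6$ are genuinely used, rather than in any single clean inequality. A secondary subtlety is that one must exhibit a \emph{single} proper submatrix meeting all subset inequalities simultaneously, which I would handle by bounding the probability that the canonical construction admits any violating subset at all, rather than treating the subsets as independent.
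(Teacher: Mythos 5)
First, note that the paper itself offers no proof of this statement: Lemma~\ref{danthmmatfin} is introduced with the words ``We now restate Theorem 3 in \cite{charact} as the following lemma,'' i.e., it is imported as a black box, so there is no internal argument to compare yours against. Your proposal is therefore an attempt to reconstruct the proof from \cite{charact}, and judged on those terms it has the right overall shape (union bound over violating subsets, Chernoff-type tails for where the randomly sampled entries land) but a genuine gap: the quantitative core is never carried out. You defer exactly the step that constitutes the theorem --- showing that $l > \max\left\{12\log(d/\epsilon)+12, 2r\right\}$ and $r \le \frac{d}{6}$ make the union bound over all subset sizes and all candidate row sets sum to at most $\epsilon$ --- and you yourself identify the uniform convergence over the full range of subset sizes as the ``main obstacle'' without resolving it. Asserting that ``the constants absorb the $\log(d/\epsilon)$ factor together with the combinatorial overhead'' is a restatement of the claim, not a proof of it.

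Second, the regime-splitting difficulty you anticipate is a symptom of attacking the mixed inequality $m(\mathbf{\Omega}') \ge n(\mathbf{\Omega}')/r + r$ directly on an $r(d-r)$-column submatrix. The route taken in \cite{charact} avoids it: partition the chosen $r(d-r)$ columns into $r$ proper blocks of $d-r$ columns each, and require each block to satisfy the purely additive condition $m(\mathbf{\Omega}') \ge n(\mathbf{\Omega}') + r$ for every subset of its columns. If every block satisfies this, then any subset of the union having $n$ columns must place at least $n/r$ of them in a single block, and that block's subset alone already touches at least $n/r + r$ rows; the mixed inequality follows for free. The per-block additive condition is exactly the subspace-identifiability condition of the earlier Pimentel-Alarc\'on--Nowak work, whose probabilistic verification for a single $d \times (d-r)$ block is the one place where $r \le \frac{d}{6}$ and $l > 12\log(d/\epsilon)+12$ are genuinely used; a final union bound over the $r$ blocks (or $r+1$ blocks for uniqueness, which is where $(r+1)(d-r) \le N$ enters, matching the extra $(d-r)$-column submatrix of Lemma~\ref{uniqthdaniel}) finishes the argument. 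With this reduction your sketch becomes the actual proof; without it, and without the explicit tail computation, what you have is a plan rather than a proof.
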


The uniform sampling result can be described as follows.

\begin{theorem}[Probabilistic Finite and Unique Completion]\label{**}
Suppose $r\leq\frac{d}{6}$, and each column includes  at least $l$ observed entries, where
\begin{eqnarray}\notag
l- 12(r+s+1)\log(l/(r+s+1))> \nonumber \\ \max\{12(\log(\frac{d}{\epsilon})+r+s+1),2r,2r+s+1\}.
\end{eqnarray}
Then, with probability at least $1-\epsilon$, almost every ${\bf X}$ will be finitely completable if $N\geq r(d-r)$ and uniquely completable if $N\geq (r+1)(d-r)$.
\end{theorem}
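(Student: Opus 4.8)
The plan is to deduce this probabilistic statement from the deterministic Theorems \ref{finth} and \ref{unith} by showing that, with probability at least $1-\epsilon$ over the random sampling, \emph{every} admissible reduced pattern $\widehat{\bf \Omega}$ meets the noiseless combinatorial requirements of Lemmas \ref{finthdaniel} and \ref{uniqthdaniel}. Concretely, for finite completability I would verify the hypothesis of Theorem \ref{finth}: for each $\widehat{\bf \Omega}$ obtained by deleting $s$ observed entries, $\mathbf{C}(\widehat{\bf \Omega})$ must contain a proper submatrix of $r(d-r)$ columns with the spreading property \eqref{fieq}; for unique completability I would verify Theorem \ref{unith}, where $s+1$ entries are deleted and the additional $(d-r)$-column block satisfying \eqref{uniadeq} is required. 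The essential point is that these theorems quantify over \emph{all} ways of removing the unknown, effectively adversarial noise support, so the probabilistic argument must guarantee the deterministic conditions simultaneously for all such $\widehat{\bf \Omega}$, which calls for a union bound over noise supports layered on top of the randomness of $\mathbf{\Omega}$.

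The core of the proof adapts the Chernoff-type argument behind Lemma \ref{danthmmatfin} (Theorem 3 of \cite{charact}). I would recall that, in the noiseless setting, the spreading condition is reduced to a per-column covering event whose failure probability across the $d$ row-constraints and $N$ columns is controlled by a binomial tail, producing the threshold $l>\max\{12\log(d/\epsilon)+12,\,2r\}$. To incorporate noise, I would enlarge the event to be controlled: within each column an adversary may designate up to $s+1$ of the $l$ observed entries as noisy and delete them, so the covering event must survive \emph{every} such deletion. Bounding the number of deletion choices per column by $\binom{l}{s+1}$, and, when combined with the $r$ fixed pivot rows used to build the columns of $\mathbf{C}(\cdot)$, by a factor of order $\binom{l}{r+s+1}$, and taking logarithms yields the extra cost $\log\binom{l}{r+s+1}\le (r+s+1)\big[\log(l/(r+s+1))+1\big]$. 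Multiplying by the same Chernoff constant $12$ as in \cite{charact} and folding the residual $+1$ into the maximum reproduces exactly the subtracted term $12(r+s+1)\log(l/(r+s+1))$ on the left and the $12(r+s+1)$ summand on the right.

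With this, I would assemble the final probability bound and match it term by term to the stated inequality. The three entries of the maximum arise naturally: $2r$ is the noiseless requirement inherited from Lemma \ref{danthmmatfin}; $2r+s+1$ is its robust counterpart, expressing that at least $2r$ clean samples per column must remain after the worst-case deletion of $s+1$ entries; and $12(\log(d/\epsilon)+r+s+1)$ collects the Chernoff exponent together with the union bound over the $d$ rows and the noise supports. Since deleting $s$ or $s+1$ entries does not change the number of columns of $\mathbf{C}(\widehat{\bf \Omega})$ that can be selected, the column-count conditions of Lemma \ref{danthmmatfin} transfer verbatim, giving $N\ge r(d-r)$ for the $r(d-r)$-column block (finite completability) and $N\ge (r+1)(d-r)$ once the extra $(d-r)$-column block is appended (unique completability). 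Taking a union bound over the finitely many reduced patterns then yields the claim with probability at least $1-\epsilon$.

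The step I expect to be the main obstacle is threading the noise-support union bound through the per-column Chernoff analysis of \cite{charact} while preserving the precise constants in the statement. In particular, one must argue that controlling the \emph{per-column} robust-covering event, namely adversarial removal of up to $s+1$ entries \emph{within} a single column, is the correct and sufficient reduction, as opposed to merely accounting for $s+1$ deletions spread across the whole matrix, and then verify that the binomial bound $\binom{l}{r+s+1}$ and the resulting $\log(l/(r+s+1))$ form emerge with coefficient exactly $12$. Checking that the worst case, in which all noisy entries are concentrated in one column, is the binding constraint, and that it is still dominated by the assumed lower bound on $l$, is where the careful bookkeeping lies.
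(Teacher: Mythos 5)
Your proposal is correct and follows essentially the same route as the paper: the paper's proof is a one-line extension of Lemma \ref{danthmmatfin} via a union bound over the at most $\binom{l}{r+s+1}$ choices of $r+s+1$ removals in each column, which is exactly the reduction you carry out (your version simply makes explicit the bookkeeping of $\log\binom{l}{r+s+1}$ and the constants, and the appeal to Theorems \ref{finth} and \ref{unith} that the paper leaves implicit).
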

\begin{proof}
This is a simple extension of Lemma \ref{danthmmatfin} by using union bound over all at most $\binom{l}{r+s+1}$ choices for each column since the property has to hold over all such choices of $r+s+1$ removals in each of the columns.
\end{proof}

\section{Sampling Conditions for Completion with Noisy Entries in each Column}

Having $s$ entries in ${\bf W}$  anywhere in the data makes each column to have at least $O(r+s)$ elements which is large for a large-scale matrix. We next consider a structure where each column has almost $g$ noisy elements. In other words, each column of ${\bf W}$ has L-0 norm less than or equal to $g$. Then, the Theorems \ref{finth} and \ref{unith} can be easily extended to consider different pattern on $\widehat{\bf{ \Omega}}$, and for completion the modified Theorems \ref{finth} and \ref{unith} are described as follows.

\begin{theorem}[Deterministic Finite Completion for Column-wise Sparse Noise]\label{finitclth}
Assume that each column of ${\bf{ \Omega}}$ has at least $r+g+1$ entries which are $1$. Suppose that for each $\widehat{\bf{ \Omega}}$ such that each column of $\widehat{\bf{ \Omega}}$ has $g+1$ less ones than that in ${\bf{ \Omega}}  $, and the entry of $\widehat{\bf{ \Omega}}$ is zero if the corresponding entry of ${\bf{ \Omega}}$ is zero, if ${\bf C}(\widehat{\bf{ \Omega}})$ contains a matrix: $\breve{\bf{ \Omega}}$ formed with $r(d-r)$ columns such that
	
	(i) every matrix ${\bf{ \Omega}}'$ formed with a subset of columns in  $\breve{\bf{ \Omega}}$ satisfies
	\begin{equation}
	m({\bf{ \Omega}}') \ge n({\bf{ \Omega}}')/r+r.
	\end{equation}

Then, for almost every rank-$r$ matrix ${\bf X}$, there exist finitely many rank-$r$ completions, where the entries of ${\bf W}$ are generically chosen.
\end{theorem}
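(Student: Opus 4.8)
The plan is to bound the number of valid rank-$r$ completions by covering the solution set with finitely many combinatorial subproblems, each certified by Lemma~\ref{finthdaniel}, in the same spirit as the proof of Theorem~\ref{finth} but now carried out column-by-column. First I would fix what a valid completion is: a rank-$r$ matrix ${\bf Y}$ for which there exists a column-wise $g$-sparse ${\bf W}'$ with $({\bf Y}+{\bf W}')_{\bf \Omega}=({\bf X}+{\bf W})_{\bf \Omega}$; equivalently, ${\bf Y}$ agrees with the observed data on ${\bf \Omega}\setminus S_{\bf Y}$, where the discrepancy set $S_{\bf Y}$ has at most $g$ entries in every column. Since each column of ${\bf \Omega}$ carries at least $r+g+1$ samples and $S_{\bf Y}$ removes at most $g$ of them per column, I can enlarge $S_{\bf Y}$ to a set deleting exactly $g+1$ entries from each column, producing a pattern $\widehat{\bf \Omega}\subseteq{\bf \Omega}\setminus S_{\bf Y}$ of precisely the type quantified over in the hypothesis. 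Then ${\bf Y}$ matches the observed data on all of $\widehat{\bf \Omega}$, so ${\bf Y}$ lies in the set of rank-$r$ completions of the sampled data restricted to $\widehat{\bf \Omega}$.

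Next I would invoke the hypothesis for this $\widehat{\bf \Omega}$: since ${\bf C}(\widehat{\bf \Omega})$ contains a proper submatrix $\breve{\bf \Omega}$ on $r(d-r)$ columns whose every column-subset obeys $m({\bf \Omega}')\ge n({\bf \Omega}')/r+r$, Lemma~\ref{finthdaniel} guarantees that the data restricted to $\widehat{\bf \Omega}$ admits at most finitely many rank-$r$ completions for almost every underlying data on that fixed pattern. The one point requiring care is that the relevant data is $({\bf X}+{\bf W})_{\widehat{\bf \Omega}}$ rather than a purely clean ${\bf X}$: for a \emph{fixed} combinatorial pattern $\widehat{\bf \Omega}$ the exceptional set in Lemma~\ref{finthdaniel} is a measure-zero algebraic subset of the entries living on $\widehat{\bf \Omega}$, and the linear map $({\bf X},{\bf W})\mapsto({\bf X}+{\bf W})_{\widehat{\bf \Omega}}$ pulls this back to a measure-zero set in $({\bf X},{\bf W})$. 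Intersecting over the finitely many choices of $\widehat{\bf \Omega}$ (at most $\prod_i\binom{l_i}{g+1}$ of them, with $l_i$ the number of samples in column $i$) leaves a full-measure set of $({\bf X},{\bf W})$ on which every such subproblem simultaneously has finitely many rank-$r$ completions.

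Finally I would assemble the count: every valid completion belongs to $\bigcup_{\widehat{\bf \Omega}}\{\text{rank-}r\ {\bf Y}\ \text{matching the observed data on }\widehat{\bf \Omega}\}$, a finite union of finite sets, so the total number of valid rank-$r$ completions is finite for almost every ${\bf X}$ and generically chosen ${\bf W}$. The reason for deleting $g+1$ rather than $g$ entries is that it lets the purely combinatorial hypothesis---quantified over patterns that never reference the unknown noise locations---certify each subproblem while guaranteeing a valid subpattern always exists strictly inside the clean region; this extra deletion plays the same margin-providing role here as the single additional removal in the identification step of Theorem~\ref{unith}.

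I expect the genericity transfer in the second step to be the main obstacle: one must verify that the ``almost every ${\bf X}$'' guarantee of Lemma~\ref{finthdaniel}, whose exceptional set is attached to a fixed sampling pattern in the noiseless model, survives substitution of the corrupted data $({\bf X}+{\bf W})_{\widehat{\bf \Omega}}$ and can be made uniform across all finitely many subpatterns $\widehat{\bf \Omega}$. By contrast, the combinatorial step---the existence of a $(g+1)$-removal subpattern inside ${\bf \Omega}\setminus S_{\bf Y}$---is routine given the assumption of at least $r+g+1$ samples in each column.
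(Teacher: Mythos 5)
Your overall architecture is sound and is in fact more ambitious than the paper's own treatment: the paper states this theorem without proof, as an ``easy extension'' of Theorem~\ref{finth}, whose proof merely exhibits the \emph{single} pattern $\widehat{\bf{ \Omega}}$ obtained by deleting the true noise locations and applies Lemma~\ref{finthdaniel} to it, thereby counting only rank-$r$ matrices consistent with that one clean subpattern. Your union over \emph{all} admissible $\widehat{\bf{ \Omega}}$ is exactly what is required under the stronger (and more natural) reading in which a valid completion may disagree with the observations on a discrepancy set different from the true support of ${\bf W}$, and your combinatorial step (enlarging the discrepancy set to exactly $g+1$ removals per column, possible because each column of ${\bf{ \Omega}}$ has at least $r+g+1$ samples) is correct.

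The genuine gap is in the genericity-transfer step, which you yourself flag as the main obstacle. Lemma~\ref{finthdaniel} asserts that the exceptional set is null within the space of rank-$r$ matrices ${\bf X}$; it says nothing about data vectors on $\widehat{\bf{ \Omega}}$ that are \emph{not} restrictions of rank-$r$ matrices, which is precisely the situation whenever $\widehat{\bf{ \Omega}}$ still contains noisy entries. Moreover, even granting that the bad set $B \subset \mathbb{R}^{||\widehat{\bf{ \Omega}}||_0}$ of data vectors admitting infinitely many rank-$r$ completions is Lebesgue-null, the principle ``a linear map pulls null sets back to null sets'' does not apply here: the relevant domain is not the full space $({\bf X},{\bf W})$ but the measure-zero subvariety (rank-$r$ variety) $\times$ (matrices supported on the noise support), and its image under $({\bf X},{\bf W})\mapsto({\bf X}+{\bf W})_{\widehat{\bf{ \Omega}}}$ is itself typically Lebesgue-null, so nothing in your argument prevents that image from lying entirely inside $B$. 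The repair is algebro-geometric rather than measure-theoretic: restate Lemma~\ref{finthdaniel} as saying that the generic fiber of the projection $\pi$ of the rank-$r$ variety $V_r$ onto the $\widehat{\bf{ \Omega}}$-coordinates is finite, so that $\dim \overline{\pi(V_r)} = \dim V_r$ and, by upper semicontinuity of fiber dimension, the locus $B$ of image points with positive-dimensional fiber lies in a closed subvariety of dimension strictly smaller than $\dim V_r$; then observe that $\pi(V_r)$ is contained in the image of the corrupted-data map $\phi$, so $\phi$ does not map into $\overline{B}$, and by irreducibility of its domain $\phi^{-1}(\overline{B})$ is a proper closed subvariety, hence null. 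With that substitution your finite-union-of-finite-sets conclusion goes through; as written, however, this key step is not justified.
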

 
\begin{theorem}[Deterministic Unique Completion for Column-wise Sparse Noise]\label{uniclth}
Assume that each column of ${\bf{ \Omega}}$ has at least $r+g+1$ entries which are $1$. Suppose that for each $\widehat{\bf{ \Omega}}$ such that each column of $\widehat{\bf{ \Omega}}$ has $g+1$ less ones than that in ${\bf{ \Omega}}  $, and the entry of $\widehat{\bf{ \Omega}}$ is zero if the corresponding entry of ${\bf{ \Omega}}$ is zero, if ${\bf C}(\widehat{\bf{ \Omega}})$ contains two disjoint proper submatrices: $\breve{\bf{ \Omega}}$ formed with $r(d-r)$ columns and $\bar{\bf{ \Omega}}$ formed with $(d-r)$ columns, such that
	
	(i) every matrix ${\bf{ \Omega}}'$ formed with a subset of columns in  $\breve{\bf{ \Omega}}$ satisfies
	\begin{equation}
	m({\bf{ \Omega}}') \ge n({\bf{ \Omega}}')/r+r,
	\end{equation}
	and 
	(ii)  every matrix ${\bf{ \Omega}}'$ formed with a subset of columns in  $\bar{\bf{ \Omega}}$ satisfies
	\begin{equation}
	m({\bf{ \Omega}}') \ge n({\bf{ \Omega}}')+r.
	\end{equation}
	Then almost every rank-$r$ matrix ${\bf X}$ can be recovered from noise where the entries of ${\bf W}$ are generically chosen. 
\end{theorem}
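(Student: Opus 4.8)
The plan is to mirror the two-stage argument in the proof of Theorem~\ref{unith}, replacing the global budget of $s+1$ removed entries by a per-column budget of $g+1$ removals and carrying out the noise-support identification column by column. Since each column of ${\bf W}$ carries at most $g$ nonzero entries, I would first note that some admissible $\widehat{\bf \Omega}$ (deleting $g+1$ ones from every column) has all retained entries noiseless: this is the correct candidate, for which the true ${\bf X}$ agrees with every retained observation.

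First I would set up the identification step. Enumerating each admissible pattern $\underline{\bf \Omega}$ that deletes exactly $g$ ones per column, I would examine the rank-$r$ completions matching its retained entries. If such a pattern fails to delete all noisy entries in some column, then at least one genuine noise value survives; deleting it produces a $\widehat{\bf \Omega}$ with $g+1$ removals in that column, which by hypothesis~(i) together with Theorem~\ref{finitclth} (equivalently Lemma~\ref{finthdaniel}) admits only finitely many rank-$r$ completions. Because that surviving noise entry is generic and independent of these finitely many completions, it matches none of them with probability one, so no rank-$r$ matrix can agree with all retained observations of $\underline{\bf \Omega}$. The finiteness bookkeeping of Theorem~\ref{unith}---summing finitely many completions over the cases where a column of ${\bf W}$ has $g,g-1,\dots$ nonzeros---then confines the exceptional set to measure zero. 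Hence a removal pattern admits a matching rank-$r$ completion only when it deletes every noisy entry, and this criterion identifies the support of ${\bf W}$.

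Once the support of ${\bf W}$ is known, I would discard the corrupted observations and treat the surviving clean entries as a noiseless sampling pattern. This pattern is contained in one of the admissible $\widehat{\bf \Omega}$, so conditions~(i) and~(ii) apply to it; in particular the disjoint submatrices $\breve{\bf \Omega}$ and $\bar{\bf \Omega}$ satisfy \eqref{fieqdanuni1} and \eqref{fieqdanuni2}. Invoking Lemma~\ref{uniqthdaniel} then yields a unique rank-$r$ completion, which must coincide with ${\bf X}$, establishing the claimed recovery.

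The step I expect to be the main obstacle is the probability-one correctness of the identification: I must ensure that, simultaneously over all incorrect removal patterns, a retained generic noise value never accidentally lands on the finite solution variety of the associated $\widehat{\bf \Omega}$. This rests on hypothesis~(i) holding for every relevant $\widehat{\bf \Omega}$, so that each bad pattern contributes only a measure-zero matching set, followed by a union bound over the finitely many removal patterns and noise-cardinality cases---exactly the genericity-plus-finiteness mechanism of Theorem~\ref{unith}, but controlled column-wise rather than globally.
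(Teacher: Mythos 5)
Your proposal is correct and takes essentially the same approach as the paper: the paper gives no separate proof of this theorem, stating only that Theorems~\ref{finth} and~\ref{unith} ``can be easily extended'' to the column-wise pattern, and your two-stage argument (noise-support identification by testing generic retained noise values against the finitely many completions guaranteed by hypothesis~(i), then unique completion of the clean entries via Lemma~\ref{uniqthdaniel}) is exactly the proof of Theorem~\ref{unith} carried out per column. The only slip is directional and harmless: the clean pattern \emph{contains} an admissible $\widehat{\bf{\Omega}}$ rather than being contained in one, which is what lets conditions~(i)--(ii) apply, and uniqueness for that sub-pattern transfers to the larger set of clean observations since ${\bf X}$ completes both.
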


Having identified the sampling conditions on robust data completion, we now determine the uniform random sampling conditions for the data completion. 

\begin{theorem}[Probabilistic Finite and Unique Completion for Column-wise Sparse Noise]\label{*}
Suppose $r\leq\frac{d}{6}$, and each column includes  at least $l$ observed entries, where
\begin{eqnarray}\label{fhkfj}
l- 12(g+1)\log(l/(g+1))> \nonumber \\ \max\{12(\log(\frac{d}{\epsilon})+g+1),2r,r+g+1\}.
\end{eqnarray}
Then, with probability at least $1-\epsilon$, almost every ${\bf X}$ will be finitely completable if $N\geq r(d-r)$ and uniquely completable if $N\geq (r+1)(d-r)$.
\end{theorem}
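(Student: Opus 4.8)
The plan is to mirror the proof of Theorem~\ref{**}: reduce the probabilistic claim to the deterministic conditions of Theorems~\ref{finitclth} and~\ref{uniclth}, and then control those conditions with Lemma~\ref{danthmmatfin} together with a union bound. By Theorems~\ref{finitclth} and~\ref{uniclth} it suffices to show that, with probability at least $1-\epsilon$, every pattern $\widehat{\bf \Omega}$ obtained by deleting $g+1$ ones from each column of $\bf \Omega$ simultaneously admits an $r(d-r)$-column block of ${\bf C}(\widehat{\bf \Omega})$ obeying the expansion condition~(i), and---for the uniqueness claim---an additional disjoint $(d-r)$-column block obeying condition~(ii). Each such $\widehat{\bf \Omega}$ has $l-(g+1)$ observed entries per column, and the hypothesis that the left side of \eqref{fhkfj} exceeds $r+g+1$ guarantees that each reduced column still carries more than $r$ entries, so that the noiseless Lemmas~\ref{finthdaniel} and~\ref{uniqthdaniel} apply to it.

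First I would fix the deletion made in each column and view the resulting $\widehat{\bf \Omega}$ as a sampling pattern of per-column density $l-(g+1)$. Invoking the per-column concentration that underlies Lemma~\ref{danthmmatfin}, such a pattern can violate the required expansion property only through a column-local failure event, whose probability is governed by the reduced density together with a union bound over the $d$ rows; this is the source of the $12\log(d/\epsilon)$ contribution and of the $2r$ term in the maximum. The essential new ingredient is that the property must hold not for one deletion but for \emph{every} choice of $g+1$ deleted entries in each column. Since the probabilistic analysis of \cite{charact} decouples across columns, I would apply the union bound over the $\binom{l}{g+1}$ deletion choices within each column rather than over the $\binom{l}{g+1}^{N}$ global deletion patterns; this per-column accounting is precisely why no $\log N$ term appears in the density requirement.

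It remains to convert the union-bound factor into the stated sample complexity. Using $\binom{l}{g+1}\le\bigl(el/(g+1)\bigr)^{g+1}$ gives $\log\binom{l}{g+1}\le (g+1)\log\frac{l}{g+1}+(g+1)$, and absorbing this into the exponent of the per-column failure bound---with the same constant $12$ as in Lemma~\ref{danthmmatfin}---replaces the effective density $l$ by $l-12(g+1)\log\frac{l}{g+1}$ on the left side of \eqref{fhkfj} and shifts the additive constant on the right from $12$ to $12(g+1)$, reproducing \eqref{fhkfj}. The finite-completability conclusion then follows from $N\ge r(d-r)$ exactly as in Lemma~\ref{danthmmatfin}; for uniqueness I would additionally demand the disjoint $(d-r)$-column block meeting \eqref{uniadeq}, which is supplied by the stronger count $N\ge (r+1)(d-r)$, and invoke Lemma~\ref{uniqthdaniel}.

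The step I expect to be the main obstacle is justifying the per-column union bound over deletions. The reduced patterns $\widehat{\bf \Omega}$ are not independent fresh samples but adversarially chosen subsets of a single uniform sample, so one must verify both that applying the noiseless guarantee at density $l-(g+1)$ to each such subset is legitimate and that the worst-case deletion is genuinely captured by the $\binom{l}{g+1}$ union bound. Making precise the claim that the expansion condition fails only through column-local events---so that the overall failure probability is additive across columns and multiplicative in $\binom{l}{g+1}$ only within a column---is the technical heart, and it rests squarely on the internal structure of the proof of Lemma~\ref{danthmmatfin} in \cite{charact}.
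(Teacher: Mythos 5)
Your proposal follows essentially the same route as the paper's proof: the paper proves this theorem exactly as a union bound over the $\binom{l}{g+1}$ choices of $g+1$ removals within each column, applied to the noiseless probabilistic guarantee of Lemma~\ref{danthmmatfin}. Your write-up in fact supplies details the paper leaves implicit---the reduction through Theorems~\ref{finitclth} and~\ref{uniclth}, the bound $\log\binom{l}{g+1}\le (g+1)\log\frac{l}{g+1}+(g+1)$ explaining the form of \eqref{fhkfj}, and the honest caveat that the per-column decoupling rests on the internal structure of the argument in \cite{charact}---so it is, if anything, more complete than the paper's own one-sentence proof.
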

\begin{proof}
This is a simple extension of Lemma \ref{danthmmatfin} by using union bound over all $\binom{l}{g+1}$ choices in the column since the property has to hold over all such choices of $g+1$ removals in each of the columns.
\end{proof}

We note that for $g\approx l/r$ ($r=\omega(1)$) in each column where $l$ entries are observed, the number of samples needed in each column is $O(\max(r,\log(d)))$. We note that this setting proposed an an open problem in \cite{cherapanamjeri2016nearly}, where the authors asked if the observations of $O(r\log(d))$ per column suffice. In this paper, we answer this question in positive, and further reducing the number of observations needed to $O(\max(r,\log(d)))$. This result does not necessarily need $g\approx l/r$, while will work as long as $g/l = o(1)$ or $l>>g$.


\section{Sampling Conditions for Rank Estimation with noisy entries}

So far, we assumed that the rank $r$ of the matrix is known. In this section, we assume that the value of the rank, $r$, is not given and we are interested in approximating it. The following lemma is restatement of Corollary 1 in \cite{ashraphijuo5}.

\begin{lemma}\label{rankdetlem}
Suppose that the matrix is noiseless, i.e., $s=0$. Define $\mathcal{S}_{\mathbf{\Omega}}=\{1,2,\dots,r^*\}$, where $r^*$ is the maximum number such that the assumption on the sampling pattern given in Lemma \ref{finthdaniel} holds true, i.e., $r^*$ is the maximum number such that there are finitely many completions of $\mathbf{X}$ of rank $r^*$, and let $r^{\prime} \in \mathcal{S}_{\mathbf{\Omega}}$. Then, with probability one, exactly one of the followings holds

(i) $r \in \{1,2,\dots,r^{\prime}\}${\rm ;}

(ii) For any arbitrary completion of the matrix $\mathbf{X}$ of rank $r^{\prime \prime}$, we have $r^{\prime \prime} \notin \{1,2,\dots,r^{\prime}\}$.
\end{lemma}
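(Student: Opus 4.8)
The plan is to recast ``completable to rank $k$'' as membership in a chain of nested algebraic varieties and then reduce the dichotomy to a single strict dimension increase. For each integer $k$ let $\mathcal{M}_k\subseteq\mathbb{R}^{d\times N}$ be the determinantal variety of matrices of rank at most $k$ (irreducible, of dimension $k(d+N-k)$ for $k\le\min(d,N)$), let $\mathcal{P}:\mathbb{R}^{d\times N}\to\mathbb{R}^{\|\mathbf{\Omega}\|_0}$ be the linear coordinate projection onto the observed entries, and define the observation variety $\mathcal{V}_k=\overline{\mathcal{P}(\mathcal{M}_k)}$ (Zariski closure). Because $\mathcal{M}_1\subseteq\mathcal{M}_2\subseteq\cdots$ and $\mathcal{P}$ is linear, the $\mathcal{V}_k$ are irreducible and nested, $\mathcal{V}_1\subseteq\mathcal{V}_2\subseteq\cdots$, and a rank-$k$ completion of the observed data exists exactly when $\mathcal{P}(\mathbf{X})\in\mathcal{V}_k$. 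The entire statement is thus a statement about where $\mathcal{P}(\mathbf{X})$ lands in this chain.

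First I would dispose of mutual exclusivity and the trivial branch. If $r\le r'$ then $\mathbf{X}$ itself is a rank-$r$ (hence rank-$\le r'$) completion, so (i) holds and (ii) fails; this shows at most one of (i),(ii) can hold, and that (i) holds on the event $r\le r'$. It remains to show that on the event $r>r'$ the alternative (ii) holds with probability one, i.e.\ that for almost every $\mathbf{X}$ of rank $r>r'$ there is no rank-$\le r'$ completion, equivalently $\mathcal{P}(\mathbf{X})\notin\mathcal{V}_{r'}$.

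The engine is the strict inequality $\dim\mathcal{V}_{r'}<\dim\mathcal{V}_r$ whenever $r>r'$. Since $r'\le r^{*}$, Lemma~\ref{finthdaniel} certifies finite completability at every rank in $\{1,\dots,r^{*}\}$, which means the generic fiber of $\mathcal{P}|_{\mathcal{M}_k}$ is zero-dimensional for $k\le r^{*}$; by the fiber-dimension theorem applied to the dominant morphism $\mathcal{P}|_{\mathcal{M}_k}\to\mathcal{V}_k$ this forces $\dim\mathcal{V}_k=\dim\mathcal{M}_k=k(d+N-k)$ for all $k\le r^{*}$. As $k(d+N-k)$ is strictly increasing for $k\le\min(d,N)$ and nesting gives $\dim\mathcal{V}_{r^{*}}\le\dim\mathcal{V}_r$, in the case $r'<r^{*}$ I immediately obtain $\dim\mathcal{V}_{r'}=r'(d+N-r')<r^{*}(d+N-r^{*})=\dim\mathcal{V}_{r^{*}}\le\dim\mathcal{V}_r$. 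A strict dimension gap between the irreducible nested varieties $\mathcal{V}_{r'}\subseteq\mathcal{V}_r$ makes $\mathcal{V}_{r'}$ a proper Zariski-closed subset of $\mathcal{V}_r$, so $\mathcal{M}_r\cap\mathcal{P}^{-1}(\mathcal{V}_{r'})$ is a proper subvariety of the irreducible $\mathcal{M}_r$ and hence has measure zero; a generic rank-$r$ matrix therefore satisfies $\mathcal{P}(\mathbf{X})\notin\mathcal{V}_{r'}$, which is exactly (ii), and together with the second paragraph yields ``exactly one''.

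The only case this leaves open, and the step I expect to be the crux, is $r'=r^{*}$ together with $r>r^{*}$: now $\dim\mathcal{V}_{r'}=r^{*}(d+N-r^{*})$ and I can no longer invoke finite completability at rank $r^{*}+1$, so the monotonicity shortcut stalls and everything reduces to the single strict inclusion $\mathcal{V}_{r^{*}}\subsetneq\mathcal{V}_{r^{*}+1}$, equivalently that a generic rank-$(r^{*}+1)$ matrix admits no rank-$r^{*}$ completion. I would establish this directly rather than through a fiber count, via a tangent-space computation on the parametrization $\phi_k(\mathbf{U},\mathbf{V})=(\mathbf{U}\mathbf{V}^{\top})_{\mathbf{\Omega}}$ with $\mathbf{U}\in\mathbb{R}^{d\times k}$ and $\mathbf{V}\in\mathbb{R}^{N\times k}$: it suffices to show that at a generic point the column space of the Jacobian of $\phi_{r^{*}+1}$ strictly contains that of $\phi_{r^{*}}$, so $\mathcal{V}_{r^{*}+1}$ has strictly larger dimension and cannot collapse onto $\mathcal{V}_{r^{*}}$. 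The honest obstacle is that finite completability at $r^{*}$ only certifies $\dim\mathcal{V}_{r^{*}}=r^{*}(d+N-r^{*})$ and \emph{not} that $\mathcal{V}_{r^{*}}$ is a proper subset of the observation space; if $\mathcal{V}_{r^{*}}$ already filled $\mathbb{R}^{\|\mathbf{\Omega}\|_0}$ the strict inclusion would fail. The argument therefore tacitly needs $\|\mathbf{\Omega}\|_0>r^{*}(d+N-r^{*})$ (strictly more observations than the finite-completability budget at $r^{*}$), a nondegeneracy the sampling regimes of the earlier theorems supply; once $\mathcal{V}_{r^{*}}\subsetneq\mathcal{V}_{r^{*}+1}$ is secured, nesting delivers $\dim\mathcal{V}_{r'}<\dim\mathcal{V}_r$ for every $r>r'$ and the measure-zero argument of the previous paragraph closes the dichotomy.
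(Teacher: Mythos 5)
First, a point about the comparison itself: the paper never proves this lemma --- it is imported verbatim as a ``restatement of Corollary 1 in \cite{ashraphijuo5}'' --- so there is no in-paper argument to measure your proposal against; it must stand on its own. Your framework is the natural one and most of it is sound: the nested observation varieties $\mathcal{V}_k=\overline{\mathcal{P}(\mathcal{M}_k)}$, the fiber-dimension theorem converting ``finitely many completions at rank $k\le r^*$'' into $\dim\mathcal{V}_k=k(d+N-k)$, and the observation that a strict inclusion of nested irreducible varieties makes the bad set $\mathcal{M}_r\cap\mathcal{P}^{-1}(\mathcal{V}_{r'})$ a proper closed (hence null) subvariety of $\mathcal{M}_r$. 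This correctly settles mutual exclusivity, the branch $r\le r'$, and every configuration with $r\le r^*$ or $r'<r^*$. (One small repair: your chain $\dim\mathcal{V}_{r^*}\le\dim\mathcal{V}_r$ invokes nesting, which needs $r^*\le r$; in the subcase $r'<r<r^*$ you should instead cite finite completability at rank $r$ itself, which pins $\dim\mathcal{V}_r=r(d+N-r)$ directly.)

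The genuine gap is exactly the case you flag and do not close: $r'=r^*$ with $r>r^*$. There you only sketch a Jacobian argument for $\mathcal{V}_{r^*}\subsetneq\mathcal{V}_{r^*+1}$ and concede it needs $\|\mathbf{\Omega}\|_0>r^*(d+N-r^*)$, a hypothesis that appears nowhere in the lemma ($r'\le r^*$ is the only assumption; none of the paper's probabilistic sampling regimes are in force). This is not a repairable oversight, because in that case the statement as written in this letter is actually false. Take $d=N=2$ with $\mathbf{\Omega}$ observing $x_{11},x_{12},x_{21}$. The condition of Lemma \ref{finthdaniel} holds at rank $1$ (the single column of $\mathbf{C}(\mathbf{\Omega})$ gives $m=2\ge n/r+r=2$, and indeed the rank-$1$ completion $x_{22}=x_{12}x_{21}/x_{11}$ is unique), while it fails at rank $2$ (the second column has fewer than $2$ samples), so $r^*=1$ and $r'=1$ is admissible. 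Yet for almost every rank-$2$ matrix $\mathbf{X}$, alternative (i) fails since $2\notin\{1\}$, and alternative (ii) also fails, because $\bigl(\begin{smallmatrix} x_{11} & x_{12}\\ x_{21} & x_{12}x_{21}/x_{11}\end{smallmatrix}\bigr)$ is a completion of rank $1$. In your language, $\|\mathbf{\Omega}\|_0=3=r^*(d+N-r^*)$ and $\mathcal{V}_1=\mathcal{V}_2=\mathbb{R}^3$: precisely the degeneracy you worried about, realized.

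So your proof can only be completed under an added hypothesis, and some such hypothesis is unavoidable. The cleanest one, which meshes with your own machinery, is to require the finite-completability condition at rank $r'+1$ rather than at $r'$: then $\dim\mathcal{V}_{r'+1}=(r'+1)(d+N-r'-1)>r'(d+N-r')\ge\dim\mathcal{V}_{r'}$, hence $\mathcal{V}_{r'}\subsetneq\mathcal{V}_{r'+1}\subseteq\mathcal{V}_r$ for every $r>r'$, and your measure-zero argument closes the dichotomy in all cases (note this works even if finite completability fails at $r'$ itself). Alternatively, restricting the claim to true rank $r\le r^*$ makes your existing argument complete. One of these strengthenings is what the hypotheses of the cited Corollary 1 in \cite{ashraphijuo5} must amount to; the restatement in this letter drops it, and your proposal inherits the resulting hole.
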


The following theorem extends the above lemma to the case of existence of sparse noise over the entire data.

\begin{theorem}[Deterministic Conditions for Rank Estimation for Robust Completion]\label{rankdetnoise}
Define $\mathcal{S}_{\mathbf{\Omega}}=\{1,2,\dots,r^*\}$, where $r^*$ is the maximum number such that the assumption on the sampling pattern given in Theorem \ref{finth} holds true and let $r^{\prime} \in \mathcal{S}_{\mathbf{\Omega}}$. Then, with probability one, exactly one of the followings holds

(i) $r \in \{1,2,\dots,r^{\prime}\}${\rm ;}

(ii) For any arbitrary completion of the matrix $\mathbf{X}$ of rank $r^{\prime \prime}$, we have $r^{\prime \prime} \notin \{1,2,\dots,r^{\prime}\}$.
\end{theorem}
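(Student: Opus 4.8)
The plan is to reduce the noisy statement to the noiseless dichotomy of Lemma~\ref{rankdetlem} by conditioning on the \emph{correct} noise-removal pattern. Let $\widehat{\mathbf{\Omega}}^{*}$ be obtained from $\mathbf{\Omega}$ by deleting the (at most $s$) locations of the true noise support $S_{\mathbf{W}}$ of $\mathbf{W}$, padding with arbitrary additional deletions if $\|\mathbf{W}_{\mathbf{\Omega}}\|_{0}<s$ so that $\|\widehat{\mathbf{\Omega}}^{*}\|_{0}=\|\mathbf{\Omega}\|-s$. On $\widehat{\mathbf{\Omega}}^{*}$ the observed values coincide with $\mathbf{X}_{\widehat{\mathbf{\Omega}}^{*}}$ and are thus noiseless samples of a generic rank-$r$ matrix. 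Since $r'\in\mathcal{S}_{\mathbf{\Omega}}$, the sampling hypothesis of Theorem~\ref{finth} holds at rank $r'$ for \emph{every} admissible $s$-deletion, in particular for $\widehat{\mathbf{\Omega}}^{*}$; but that hypothesis is precisely the combinatorial condition of Lemma~\ref{finthdaniel} applied to $\widehat{\mathbf{\Omega}}^{*}$, so $r'$ belongs to the noiseless set $\mathcal{S}_{\widehat{\mathbf{\Omega}}^{*}}$ of Lemma~\ref{rankdetlem}. This allows us to invoke the noiseless dichotomy on the pair $(\mathbf{X},\widehat{\mathbf{\Omega}}^{*})$.

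First I would dispose of the easy implication. If $r\le r'$, then the true generator $\mathbf{X}$ is itself a rank-$r''$ matrix with $r''=r\in\{1,\dots,r'\}$ that agrees with the observations off the at most $s$ noisy entries; hence a valid completion of rank at most $r'$ exists, statement (ii) is violated, and (i) must hold. This simultaneously shows the two alternatives are mutually exclusive, since (i) exhibits a completion of rank in $\{1,\dots,r'\}$ while (ii) forbids exactly such a completion.

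For the converse I would assume $r>r'$ and establish (ii) with probability one. Applying Lemma~\ref{rankdetlem} to $(\mathbf{X},\widehat{\mathbf{\Omega}}^{*},r')$, its alternative (i) fails because $r>r'$, so its alternative (ii) holds: with probability one, no rank-$r''$ matrix with $r''\le r'$ agrees with $\mathbf{X}$ on $\widehat{\mathbf{\Omega}}^{*}$. To upgrade this to \emph{all} robust completions, I would run the same argument simultaneously over every admissible $s$-deletion pattern. There are only finitely many such patterns, at most $\binom{\|\mathbf{\Omega}\|}{s}$ of them, and for each one the Theorem~\ref{finth} hypothesis again supplies the Lemma~\ref{finthdaniel} condition, so Lemma~\ref{rankdetlem} produces a measure-zero set of generators admitting a rank-$r''$ fit with $r''\le r'$ on that pattern. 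A finite union of measure-zero sets is still measure zero, so outside a null set the generic $\mathbf{X}$ admits no completion of rank at most $r'$ on any $s$-deletion of $\mathbf{\Omega}$. Since every robust completion, after deleting its own mismatch set, matches the observations on a pattern of this form, modulo the distinction between $\mathbf{X}$ and $\mathbf{X}+\mathbf{W}$ addressed below no completion of rank in $\{1,\dots,r'\}$ can exist, which is (ii).

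The main obstacle is precisely this last translation. A robust completion $\mathbf{Y}$ need only match the observations off \emph{some} set of at most $s$ entries, and that set may differ from the true support $S_{\mathbf{W}}$; hence $\mathbf{Y}$ can coincide with $\mathbf{X}+\mathbf{W}$ rather than with $\mathbf{X}$ on the retained entries, and it agrees with $\mathbf{X}$ only on a pattern that may omit up to $2s$ locations. Closing this gap requires folding the genericity of the noise values $\mathbf{W}$ into the null-set argument, so that the perturbed observations still avoid every rank-$r''$ variety ($r''\le r'$) attached to each deletion pattern. I expect the joint genericity of $\mathbf{X}$ and of the entries of $\mathbf{W}$ to be exactly what renders this finite union of exceptional varieties negligible, thereby yielding the probability-one dichotomy.
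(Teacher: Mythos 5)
Your proof follows essentially the same route as the paper's: verify that the hypothesis of Theorem \ref{finth} (equivalently, the Lemma \ref{finthdaniel} condition on every admissible $s$-deletion pattern, in particular the one deleting the true noise support) holds at rank $r'$, and then invoke the noiseless dichotomy of Lemma \ref{rankdetlem}, with mutual exclusivity supplied by $\mathbf{X}$ itself being a valid rank-$r$ completion. The obstacle you flag at the end --- that a robust completion may agree with $\mathbf{X}+\mathbf{W}$ rather than with $\mathbf{X}$ on its retained pattern, so the genericity of $\mathbf{W}$ must be folded into the null-set argument over the finitely many deletion patterns --- is real, but the paper's own two-line proof (cite Theorem \ref{finth} for finite completability, then cite Lemma \ref{rankdetlem}) silently glosses over exactly this point, so your proposal is, if anything, more explicit and complete than the published argument.
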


\begin{proof}
According to Theorem \ref{finth}, for any $r^{\prime} \in \mathcal{S}_{\mathbf{\Omega}}$, there exist finitely many completions of $\mathbf{X}$ of rank $r^{\prime}$. The rest of the proof follows from Lemma \ref{rankdetlem}.
\end{proof}

\begin{theorem}[Probabilistic Conditions for Rank Estimation for Robust Completion]\label{rankrprobnoise}
Suppose $r\leq\frac{d}{6}$, $r(d-r) \leq N$ and let $r^{\prime} \in \mathcal{S}_{\mathbf{\Omega}}$ such that each column includes  at least $l$ observed entries, where
\begin{eqnarray}
l- 12(g+1)\log(l/(g+1))> \nonumber \\ \max\{12(\log(\frac{d}{\epsilon})+g+1),2r^{\prime},r^{\prime}+g+1\}.
\end{eqnarray}

Then, with probability one, exactly one of the followings holds

(i) $r \in \{1,2,\dots,r^{\prime}\}${\rm ;}

(ii) For any arbitrary completion of the matrix $\mathbf{X}$ of rank $r^{\prime \prime}$, we have $r^{\prime \prime} \notin \{1,2,\dots,r^{\prime}\}$.
\end{theorem}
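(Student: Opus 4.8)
The plan is to reduce the probabilistic statement to the deterministic dichotomy of Theorem~\ref{rankdetnoise}, by showing that the sampling bound on $l$ forces $r^{\prime}$ into $\mathcal{S}_{\mathbf{\Omega}}$ with high probability. Theorem~\ref{rankdetnoise} already establishes that \emph{whenever} the deterministic finite-completability condition holds at rank $r^{\prime}$ (so that $r^{\prime}\in\mathcal{S}_{\mathbf{\Omega}}$), the two alternatives (i) and (ii) obey the claimed probability-one dichotomy over generic $(\mathbf{X},\mathbf{W})$. Hence the only new ingredient needed is a uniform-sampling guarantee certifying $r^{\prime}\in\mathcal{S}_{\mathbf{\Omega}}$, and this is precisely where the hypothesis on $l$ enters.

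First I would observe that the inequality imposed on $l$ is exactly the hypothesis of Theorem~\ref{*} with the working rank set to $r^{\prime}$ in place of $r$: the terms $12(g+1)\log(l/(g+1))$, $12(\log(d/\epsilon)+g+1)$, and $\max\{2r^{\prime},r^{\prime}+g+1\}$ all match verbatim after the substitution $r\mapsto r^{\prime}$. I would therefore invoke the probabilistic finite-completability guarantee of Theorem~\ref{*}, which is itself Lemma~\ref{danthmmatfin} combined with a union bound over the $\binom{l}{g+1}$ ways of deleting $g+1$ entries in each column. Applied at rank $r^{\prime}$, and using the dimension requirement $N\ge r^{\prime}(d-r^{\prime})$ for finite completability, this yields that with probability at least $1-\epsilon$ over the random sampling pattern the deterministic condition of Theorem~\ref{finitclth} holds at rank $r^{\prime}$; equivalently, $r^{\prime}\in\mathcal{S}_{\mathbf{\Omega}}$.

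With $r^{\prime}\in\mathcal{S}_{\mathbf{\Omega}}$ secured, I would close the argument through Theorem~\ref{rankdetnoise}: finitely many rank-$r^{\prime}$ completions of the corrupted data exist, so Lemma~\ref{rankdetlem} applies and produces, with probability one over the generic data and noise, exactly one of (i) $r\in\{1,\dots,r^{\prime}\}$ or (ii) no completion of any rank $r^{\prime\prime}\in\{1,\dots,r^{\prime}\}$ matching the observations. The two randomness sources are orthogonal: the bound on $l$ controls the sampling event $r^{\prime}\in\mathcal{S}_{\mathbf{\Omega}}$ (probability $\ge 1-\epsilon$), whereas the dichotomy is a probability-one event over the data and noise once that sampling event occurs.

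The main obstacle I anticipate is the bookkeeping between these two probabilities. The conclusion is stated as holding ``with probability one,'' yet the sampling guarantee delivers $r^{\prime}\in\mathcal{S}_{\mathbf{\Omega}}$ only with probability $1-\epsilon$; I would need to make explicit that the probability-one claim is taken conditionally on the high-probability good-sampling event, or equivalently to restate the dichotomy as holding with probability at least $1-\epsilon$ over the sampling and deterministically almost surely over $(\mathbf{X},\mathbf{W})$. A secondary point to verify is the dimension condition: Lemma~\ref{danthmmatfin} needs $N\ge r^{\prime}(d-r^{\prime})$ at the tested rank $r^{\prime}$, so I would check that the stated hypothesis $r(d-r)\le N$, together with $r^{\prime},r\le d/6$, actually implies the required bound at $r^{\prime}$ for every admissible $r^{\prime}\in\mathcal{S}_{\mathbf{\Omega}}$, and otherwise sharpen the hypothesis to $N\ge r^{\prime}(d-r^{\prime})$.
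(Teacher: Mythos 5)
Your proposal follows essentially the same route as the paper's own proof: use the probabilistic sampling guarantee to certify $r^{\prime} \in \mathcal{S}_{\mathbf{\Omega}}$, then hand off to Theorem~\ref{rankdetnoise} (and thereby Lemma~\ref{rankdetlem}) for the probability-one dichotomy. The only divergence is that the paper cites Theorem~\ref{**} for this step while you cite Theorem~\ref{*}; since the stated bound on $l$ is the $g$-based (column-wise) one, your citation is actually the more consistent reading, and your two caveats --- the conflation of the probability-$(1-\epsilon)$ sampling event with the probability-one conclusion, and the need for $N \ge r^{\prime}(d-r^{\prime})$ rather than just $r(d-r) \le N$ --- are genuine imprecisions in the paper that its own proof also glosses over.
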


\begin{proof}
Define $\mathcal{S}_{\mathbf{\Omega}}=\{1,2,\dots,r^*\}$, where $r^*$ is the maximum number such that the assumption on the sampling pattern given in Theorem \ref{finth} holds true. According to Theorem \ref{**}, there exist finitely many completions of $\mathbf{X}$. Hence, $r^{\prime} \leq r^*$, and therefore $r^{\prime} \in \mathcal{S}_{\mathbf{\Omega}}$. The rest of the proof follows from Theorem \ref{rankdetnoise}.
\end{proof}

\begin{remark}
Theorems \ref{rankdetnoise} and \ref{rankrprobnoise} can be directly written for noisy entries in each column, where assumption on the sampling pattern given in Theorems \ref{finth} and \ref{**} are replaced by the assumption on the sampling pattern given in Theorems \ref{finitclth} and \ref{*}, respectively.  
\end{remark}

\begin{remark}
Define $\mathcal{S}_{\mathbf{\Omega}}=\{1,2,\dots,r^*\}$, where $r^*$ is the maximum number such that the assumption on the sampling pattern given in Theorem \ref{finth} holds true. Assume that there exist a completion of the matrix $\mathbf{X}$ of rank $r^{\prime} \in \mathcal{S}_{\mathbf{\Omega}}$. Then, according to Theorem \ref{rankdetnoise}, with probability one, $r \leq r^{\prime}$. 
\end{remark}

\section{Numerical Results}\label{simusec}

In this section, we consider $\mathbf{X} \in \mathbb{R}^{600 \times 60000}$ and change the value of rank from $1$ to $100$ and $\epsilon = 0.01$. We consider the uniform number of noisy entries in each column and compare the bounds given in \eqref{promatrixdanpro} (noiseless) and \eqref{fhkfj} (noisy) for $g=1$ and $g=2$. For example, $g=1$ means that each column has one noisy entry and $60000$ noisy entries in total. The corresponding bounds result in different portions of the samples, which are shown in Figure \ref{fig2}. 

\begin{figure}[h]
	\centering
		{\includegraphics[width=8cm]{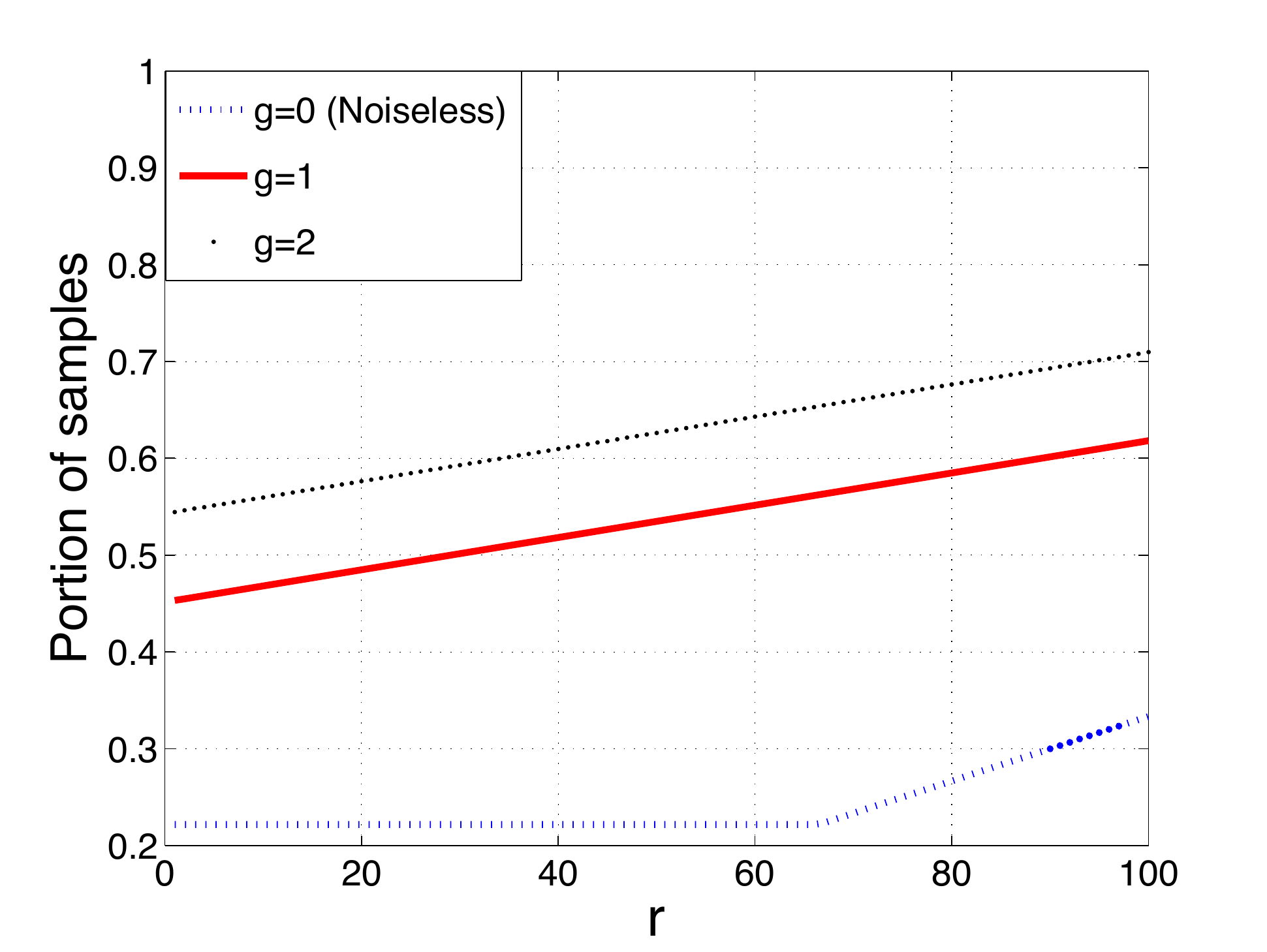}}
	\caption{ \footnotesize Comparison of portion of the required number of samples for finite/unique completability for different values of the number of noisy entries at each column, i.e., $g$.}
	\label{fig2}\vspace{-4mm}
\end{figure}


\section{Conclusions}
We studied the conditions on the sampling patterns for the completion of low rank matrix, when corrupted with a sparse noise. Both general sparse noise in the matrix and column-wise sparse noise models are considered. Using these results, an open question in \cite{cherapanamjeri2016nearly} is resolved with improved results. Furthermore, assuming that the rank of the original matrix is not given, we provide an analysis to verify if the rank of a given valid completion is indeed the actual rank of the matrix. The approach in this paper can be easily extended to other tensor structures like Tucker rank, Tensor-train rank, CP-rank, multi-view data, since the corresponding results without noise are given in
\cite{ashcon1,ashraphijuo4,ashcon2,ashraphijuo3,ashraphijuo5,ashraphijuo,ashraphijuo2}. 

Finding computationally efficient algorithms that achieve close to these bounds is an open problem. Some of the existing algorithms use alternating minimization based approaches \cite{cherapanamjeri2016nearly,yi2016fast}, which could also be extended to tensors following approaches in \cite{Wang_2017_ICCV,liu9848low,liu2016low}.

\section*{Acknowledgment}
This work was supported in part by the U.S. National Science Foundation (NSF) under grant CIF1064575, and in part by the U.S. Office of Naval Research (ONR) under grant N000141410667. The authors would like to thank Ruijiu Mao of Purdue University for helpful discussions.

\bibliographystyle{IEEETran}
\bibliography{bib}

\end{document}